\documentclass[11pt]{article}
\usepackage[margin=1.15in]{geometry}
\usepackage{amsmath,amssymb,amsthm}
\usepackage{booktabs}
\usepackage[round]{natbib}
\usepackage[hidelinks]{hyperref}
\makeatletter
\renewcommand\paragraph{\@startsection{paragraph}{4}{\z@}%
  {3.25ex \@plus1ex \@minus.2ex}{-1em}%
  {\normalfont\normalsize\itshape}}
\makeatother
\newtheorem{theorem}{Theorem}
\newtheorem{proposition}{Proposition}
\newtheorem{remark}{Remark}
\newcommand{\R}{\mathbb{R}}
\newcommand{\E}{\mathbb{E}}

\title{Scalable Inversion of Contests with Correlated Performances,\\
Including Softmax and Multinomial Probit}
\author{Peter Cotton\thanks{\texttt{peter.cotton@microprediction.com}. The reference
implementation (Python), its parity-locked ports (R, Rust, JavaScript) and the seeded
scripts behind every number in this paper live at
\texttt{github.com/microprediction/winning}; this revision's tables were
produced at repository tag \texttt{paper-r11}, package version
\texttt{1.4.0}.}}
\date{August 2026}

\begin{document}
\maketitle

\begin{abstract}
Multinomial probit choice probabilities over $n$ alternatives are
Gaussian orthant integrals, computed by simulation for thirty years, one
expensive integral per alternative. Inversion, which is to say determining item
attractiveness consistent with a prescribed choice probability vector, is even more difficult and has been
considered impractical for correlated contests when $n$ is large. Yet here, for families lying within a grammar including factor, block and hierarchical covariance structures, we exhibit a calibration tested at $n=1{,}000{,}000$ reproducing probabilities to very high accuracy, even in the extreme tail. We must return to much smaller problems for any performance comparison to be possible due to limitations of the prior art. The Geweke--Hajivassiliou--Keane simulator is the standard (and still appropriate for high rank) but is two hundred times slower
already at $n=200$, and its measured cost grows roughly as $n^{2.8}$ while
ours is linear. Furthermore our approach applies to any continuous performance distributions within reason: the Thurstone--Mosteller model families
thereby become a practical alternative to logit at modern scale.
\end{abstract}

\section{Introduction}\label{sec:intro}

In a classic setup cutting across many fields, a decision maker faces $n$ alternatives. Her choice is modeled as a stochastic contest, where each item has some inherent attractiveness (utility) to which noise, often Gaussian, is added to arrive at a random score. The probability of her choosing an item is synonymous with the probability that said item will have the largest score.

The races we consider in this paper come with an additional complexity because the random component for one item is assumed to be correlated with that of another. We are asserting a covariance on performances within a family to be described. It is natural that choice can be influenced by variables common to some or all participants. Sprinters in a bike race might prefer a slower pace. Vacation choices might co-vary with exchange rates.

In psychology an underlying model for choice is required for the computation of counterfactuals: such as the probability that a child will choose waffles if their favorite option, pancakes, were to be removed from the breakfast menu. Increasingly, vision and text machine learning applications present new kinds of selection and routing conundrums, many of which demand choosing from very large collections, such as picking the best model checkpoint from thousands. Indeed, it could be said that the generation of a token is a race, and by that measure, perhaps billions of contests are being held in the time it takes to read this paragraph.

Nature holds races at all scales, such as activation pathways in the brain, reaction sites or lightning strikes. Simulations of physical systems, such as material breaking points, can also be approximated as correlated races. There is a close connection between winning probabilities and local times in Atlas models (ranks of stochastic processes), for another day.

We initially consider the multinomial probit model (an example of the race we speak of) as a point of reference to prior art, even though our method does not impose an assumption of normal performance, nor any particular distribution family for that matter. Write
\begin{equation}
U_i = \mu_i + \varepsilon_i, \qquad \varepsilon\sim N(0,\Sigma);
\end{equation}
the probability that alternative $i$ is chosen is
\begin{equation}\label{eq:orthant}
p_i \;=\; \Pr\{U_i \ge U_j \ \text{for all } j\}
      \;=\; \Pr\{\tilde U \ge 0\},
\qquad \tilde U_j = U_i - U_j,
\end{equation}
an $(n-1)$-dimensional Gaussian orthant integral. The appeal of this inherently natural generative model has been understood since \citet{thurstone1927}. The $\Sigma$ allows alternatives to be close substitutes, but even for $\Sigma=I$ this departs from the choice axiom of \citet{luce1959} when treated as a model for preference revealed by item removal. The axiom would mandate a simple linear renormalization of probability. But if \eqref{eq:orthant} is calibrated to known $p_i$ and then one runner is removed, the new race speaks otherwise except for very special situations (such as a symmetric contest).

Although it is out of scope, most readers would not be surprised if the disagreement between counterfactuals implied by \eqref{eq:orthant} in this manner and Luce's Choice Axiom is often in the right direction for a diverse list of applications \citep{cottonsurvey}.\footnote{The practitioners might be worth listening to on the topic of Luce's Choice Axiom and its possibly dubious linear renormalization of probability. \citet{benter1994},
whose Hong Kong betting operations, by his own concession, made close to a
billion dollars overall \citep{chellel2018}, wrote of the simple
renormalization used to run probabilities down the finishing order (the
formula of \citet{harville1973}, which is the Luce axiom applied to place
and show)
that it ``is significantly biased and should not be used for betting
purposes.''} Practitioners in signal processing and demand estimation have had their own reasons for preferring Gaussian races. Often, this form is independently motivated. It may be demanded by closure. It is consistent with continuous time models and plays better with linear gaussian networks and rating systems. It may simply make more sense.

It is possible to reverse into generative race models that satisfy Luce's Choice Axiom. For instance, we might assume runners have exponential performances (not offset, but with different hazard rates), or we can choose Gumbel-noise variants of \eqref{eq:orthant}, or any monotonic transformation of the same. This is often done unwittingly, however, when simple linear renormalization of probability is applied. It is desirable to develop numerical methods that support all possibilities to remove the evidenced temptation to select a model based on its analytical tractability.

An important part of providing that utility is the share inversion: a partial calibration (fixing the covariance). As a finer point, the inversion is a map from a probability vector back to an {\em equivalence class} of races that yield the same probabilities. The raw covariance is over-parameterized, and there is an even more obvious translation symmetry since only relative performance matters.

\subsection{A key integral}

Unfortunately, the computational curse has been understood since Thurstone himself,
and it already bites in his simplest case. In the original psychometrics
terminology, \eqref{eq:orthant} with equal discriminal dispersions and
zero correlations is Thurstone's Case V. Later \citet{mosteller1951} showed
that the correlations may be relaxed from zero to any common value without
changing the method, and supplied the estimation and testing apparatus,
which is why the family carries both names.\footnote{Mosteller's
relaxation is an early sighting of the identification this paper relies
on throughout. With equal dispersions and a common correlation,
$\Sigma=\sigma^2[(1-\rho)I+\rho\mathbf{1}\mathbf{1}^{\top}]$, and since
$P\mathbf{1}=0$ the common correlation component --- that component, not
the whole covariance --- lies in the choice-irrelevant direction:
$P\Sigma P=\sigma^2(1-\rho)P$, observationally equivalent to independent
noise at variance $\sigma^2(1-\rho)$. The two price every contest
identically, which the engine reproduces to $6\times10^{-17}$. What this
paper adds is the correlation that survives the projection.} The integral \eqref{eq:orthant} has no closed form, and estimation
needs it and its derivatives at every trial value of the parameters. It is no surprise that the considerably easier logit models tend to be far more popular than probit, not only for calculations such as their obvious extension to permutation probability, but also in the very architecture of most of the world's neural networks.

The workhorse answer to the challenge posed by \eqref{eq:orthant} and calibration in particular (for which raw Monte Carlo is clearly problematic) is the GHK simulator of \citet{geweke1989},
\citet{borschsupan1993}, \citet{keane1994} and \citet{hajivassiliou1996}. For
the focal alternative $i$, GHK Cholesky-factors the covariance of the
$(n-1)$ utility differences $U_i-U_j$ and writes the orthant probability
as a nested sequence of one-dimensional truncated normal probabilities,
each conditioning on draws for the previous coordinates --- the very
recursion \citet{domencich1975} wrote down deterministically (their
eq.~4.32) and set aside as too cumbersome for estimation.

GHK's contribution was to simulate it. Averaging $R$ such importance weights gives an
unbiased, smooth estimate of one $p_i$ at cost $O(n^2)$ per draw. The estimator
made maximum simulated likelihood practical and it remains the default in
textbooks \citep{train2009}. Its costs are also well appreciated. Error shrinks as
$O(R^{-1/2})$ and the log of a simulated probability is biased at $O(1/R)$. The
gradients inherit the simulation noise too and a full probability vector costs
$n$ separate runs. A single likelihood contribution needs only the chosen
alternative's probability, so the $n$-fold cost falls not on the individual
log-likelihood but on everything else estimation touches: share inversion,
aggregate-share and market models, prediction and welfare across the menu,
and the derivative of any of these with respect to all $n$ utilities.

This paper takes a different route to the same object, following \citet{cotton2021} initially. Also in keeping with that paper, we will adopt once and for all a min-wins
convention. Performances are to be considered golf scores, or running times where the smallest is the winner. To avoid confusion with utility, we write
$$X_i=-U_i$$
and we consider the probability that any given $X_i$ takes the lowest value.

Our numerical approach requires, but does not stop with, an old idea: conditioning on the winning performance. Indeed \citet{domencich1975} write the general random-utility choice probability as a one-dimensional integral over the winning utility
(their eq.~4.7, restated as eq.~4.26 although they did not take the step to turn that into a method as with \citet{cotton2021}). Without special structure, the integrand conceals a multivariate calculation, only the Weibull case yields a convenient form, and the multiple-choice probit generalization
was, in their words, ``computationally intractable.''

Under \emph{independence} the integrand remarked on in \citet{domencich1975} really is a density times a product of univariate distribution functions, and they price exactly that form for independent Cauchy shocks (their eq.~4.38) one alternative at a
time, concluding it is obtainable by ``straightforward, but costly''
numerical integration. The sequential-conditioning
decomposition that GHK later simulates (their eqs.~4.27--4.32, set aside
as too cumbersome for iterative estimation) was present in 1975,
and judged impractical.

The expression of winning probability for the case of independent performances is standard in competing-risks
calculus: $G=\prod_k S_k$ is the all-causes survival and $f_i/S_i$ the
cause-specific hazard, so $p_i=\int G\,(f_i/S_i)$ is the classical
cause-probability formula \citep{kalbfleisch2002}.
\citet{fosgerau2013} considering random-utility models treat the expected maximum as potential  and its gradient as probabilities. Choice probability is
an integral along the common maximum level.

Our starting point is similarly this well-beaten integral:
\begin{equation}\label{eq:onedim}
p_i \;=\; \int f_i(x)\,
\Pr\{X_j>x\ \text{for all } j\ne i\mid X_i=x\}\,dx,
\end{equation}
where $f_i$ is the density of $X_i$. The simple interpretation is that we integrate over the winning time $x$,
requiring every rival to come in slower. When performances are
independent, the conditional probability factorizes into a product of
marginal survivals.
\begin{equation}\label{eq:field-intro}
p_i \;=\; \int f_i(x)\prod_{j\ne i}S_j(x)\,dx,
\end{equation}
with $S_j$ the survival of $X_j$. Under the covariance grammars developed
below the same product holds conditional on a few shared variables and is
then averaged over them, so the winning-value integral stays
one-dimensional and correlation adds only a low-dimensional outer
integral.

\subsection{Contribution}

For independent performances \citet{cotton2021} computed \eqref{eq:field-intro} for
all $n$ on a lattice in a novel manner by building the field $\sum_j \log S_j$ once and dividing
each alternative out, and inverted the map from probabilities back to $\mu$. Because performances were atomic distributions represented on a lattice a ``multiplicity calculus'' that tracked the expected number of ties conditioned on the winning score was required. That method is robust to reasonably pathological choices of performance distributions defined on a lattice (lumpy, gaps and so on) so it is preserved in our software.

The present approach trades some of that flexibility for an implicit assumption that performances are smoothly varying (in the loose sense) in order to lean more heavily on quadrature and achieve, empirically, much tighter inversion. The
present paper also extends the field construction to correlated performances by
conditioning. Whenever $\Sigma$ makes performances independent given a few shared variables, the field factorizes and all $n$ probabilities cost $O(nLQ)$ for a lattice of $L$ points and $Q$ quadrature nodes. The conditioning technique per se is standard, needless to say.

It tends to be the case in financial practice and elsewhere that structured covariance is assumed. We do not claim a general solution for arbitrary covariance matrices, but nor do these tend to be advocated \citep{train2009}. Our intent is that the grammars here cover the structured families in typical use. The first six rows of Table~\ref{tab:grammar} are represented without covariance fitting.

Within such a class of covariance matrices (we will term it the factor grammar) the derivatives are computed in the same pass at the same cost. This yields Jacobian--vector products and own-survival curves. The cost is also $O(nLQ)$, where again $L$ dictates the lattice size and $Q$ the number of quadrature points.

We will show that the Jacobian is a weighted graph Laplacian. It is applied as an operator and is never materialized in the code. The block and nested kernels materialize their Jacobians, and the tree's cross-cluster block is approximate. The most important distinction from the prior art is that there is no need to run the entire algorithm for each choice of $i$, a point we will return to shortly when discussing the existing methods.

The error is controlled by quadrature. It is deterministic for the Gauss--Hermite rules and reproducibly randomized for
scrambled-Sobol nodes.

Our caveats aside regarding the general covariance problem, we provide in the accompanying code a method for fitting arbitrary covariance to the grammar we specify, as developed in \S\ref{sec:general}. However, for truly high rank problems where accuracy is paramount, we make no claim of superiority over GHK; on the contrary, it should remain the appropriate choice if the problem at hand fails to establish otherwise.

\begin{table}[h]\centering\small
\begin{tabular}{lll}
\toprule
model & covariance grammar & source \\
\midrule
Luce / softmax & independent, minimum-Gumbel base, $D=\pi^2/6$ & exact identity \\
Thurstone--Mosteller \citep{mosteller1951} & independent, normal base & \citet{cotton2021} \\
factor multinomial probit & $VV'+\mathrm{diag}(D)$ & \S\ref{sec:factor} \\
teams, conferences & blocks: rank-$r$ effect per cluster & \S\ref{sec:blocks} \\
market plus sectors & nested: blocks $\times$ global factor & \S\ref{sec:blocks} \\
hierarchies & tree: uniform shared effects & \S\ref{sec:tree} \\
dense $\Sigma$ & fitted grammar $+$ residual factors (approx.) & \S\ref{sec:general} \\
\bottomrule
\end{tabular}
\caption{The covariance grammar.}\label{tab:grammar}
\end{table}

Our chief contribution is scaling. We seek to bring probit modeling to a qualitatively different problem space.  To put it in perspective, we note that on one laptop, the forward computation of all-$n$ probabilities from known utilities, assuming a rank-one factor and using the Rust kernels, takes $0.18$, $2.7$, and $29$ seconds respectively for $n=10^{4},10^{5},10^{6}$. That speed was unthinkable previously.

The inversion is the tougher problem yet at $n=10^{6}$ this takes not much more than a minute for the independent case, and the rank-one factor problem can be solved at this
extreme scale also, in around $20$ minutes.

Comparisons die well before this scale, as far as we know. Other approaches are considered in detail presently, but briefly, if the comparison set is taken to be those approaches providing smoothness suitable for inversion, GHK is generally regarded as leading that category.

It cannot compete. We estimate a two order of magnitude speed gap at $n=200$ that is steadily worsening with $n$. GHK's empirical complete-vector cost is superquadratic and steepening as we pass beyond that scale.\footnote{Empirically, the log-log slope is
$2.15$ from $n=10$ to $200$ and $2.8$ from $200$ to $1000$. The times at
$R=10^{3}$ were $0.57$, $5.66$, and $51.3$ seconds at $n=200,500,1000$,
scaling by ten at $R=10^{4}$.} In fact in our tests, the gap to our method grows almost quadratically with $n$ beyond $n=200$ because in that regime GHK scales at roughly $n^{2.8}$ while our shared-field pass is approximately linear in $n$ at fixed
lattice and node budgets. We have provided in Appendix~\ref{app:extrap} some brave extrapolations of GHK performance to
$10^{4}$--$10^{6}$ alternatives and the wall clock time ratios, so dramatic as to not be reproducible in any reasonable time.\footnote{See \texttt{bench.py}.}

We proceed as follows. Section \ref{sec:alternatives} places the method among the alternatives.
Section \ref{sec:lattice} gives the independent lattice race, which is the same problem considered in \citet{cotton2021} solved slightly differently, as noted. Then Section
\ref{sec:grammar} extends this across the covariance grammar. The grammar allows the set of covariance matrices to include block and tree structures with global and local factors, each one written out as a recursion. It may be of interest to those in the financial community that this structure includes that assumed in Hierarchical Risk Parity.\footnote{Although it is beyond the present scope, we mention that markets are also an example of choice probability vectors, in the sense that a dollar portfolio is allocated across stocks. The present work was motivated in part by a desire to calibrate to the market share, or to a given portfolio, under a known covariance in order to be able to re-run the same ``race'' with different assumptions or universe.}

Section \ref{sec:jac} carries the key analytical insight and presents
the Jacobian operator and its inversion. Section \ref{sec:general} reduces
dense covariance to the grammar family, approximately, and reports accuracy
per named random-correlation ensemble. Section \ref{sec:num} contains the
benchmarks.

\section{The alternatives}\label{sec:alternatives}

\paragraph{Deterministic quadrature.} \citet{genz1992} transformed the orthant
integral to the unit cube and integrated by adaptive and quasi-Monte Carlo rules.
This is somewhat of a standard in \texttt{R} because \texttt{mvtnorm} chooses this path, presumably for accuracy. It certainly is accurate. The benchmark below confirms that: its answers agree with our lattice to within the sampling noise of the simulation that both are checked against. However, the disadvantage lies in what a single call returns. Genz's method evaluates one orthant probability only, which is the win probability of one alternative. Pricing a contest
requires all $n$ of them, so the routine must be called $n$ times, and each of
those calls is itself an $(n-1)$-dimensional integral over the focal
alternative's difference orthant. At $n=30$ the complete vector therefore
costs about three hundred times what the lattice costs, and since the work per
call also grows with $n$, that ratio widens as the field grows.

\paragraph{Frequency simulation.} The crude frequency estimator of
\citet{lerman1981} counts argmax draws. It is unbiased for the whole vector at
once but not smooth in parameters, and its per-entry relative error explodes for
small $p_i$. Measured below: at wall clock matched to the lattice it carries
twelve to twenty times the total-variation error, records zero counts on tail
alternatives at $n=30$, and at ten times the budget remains a factor of two to
four behind.

\paragraph{GHK.} As described above. It is smooth in parameters but like Genz it prices one alternative per run. Despite this, GHK is seemingly the most widespread. Stata's \texttt{cmmprobit} manual states that the likelihood evaluator ``implements the Geweke--Hajivassiliou--Keane (GHK) algorithm to approximate the multivariate distribution function''
\citep{statacorp2023,gates2006}, and the command's entire
\texttt{intmethod} menu (Hammersley, Halton, or pseudorandom point
sets) selects only the sequence feeding the same simulator. The
market-leading implementation offers no non-GHK method at all. R's
\texttt{mlogit} and \texttt{mvProbit} also default to GHK
\citep{croissant2020}, and the GHK-based \texttt{mvprobit} of
\citet{cappellari2003} carried the method into applied Stata work at
large.

\paragraph{Stern.}
\citet{stern1992} is the closest ancestor of the construction used here.
He split the error into an independent normal component and a correlated
remainder, observed that conditional on the remainder the alternatives
are independent so that the choice probability is an analytic product of
univariate normal c.d.f.s, and averaged that analytic expression over
Monte Carlo draws of the remainder. The motivation was also smoothness, and
similarly, Stern is testimony to a common willingness to accept a substantial per-draw cost to buy a simulator that could be differentiated. Our work differs because the conditioning variables here are a
low-dimensional factor structure chosen so that the outer integral is
quadrature rather than simulation: deterministic Gauss--Hermite by
default, and seeded scrambled Sobol when the integrand is sharp, which
replaces Monte Carlo noise with reproducible randomized-quadrature error
rather than eliminating integration error outright. Either way there is
no per-draw sampling noise in the returned probabilities, and, as we will discuss, the conditional problem is solved on a lattice that returns all $n$ alternatives from a single sweep rather than one probability per run, with derivatives coming out of that same
sweep.

\paragraph{Distributionally robust choice.} A different escape from the
orthant integral fixes only the first two moments of the utilities and
prices the \emph{extremal} joint distribution: the cross-moment model of
\citet{mishra2012} computes, by semidefinite optimization, the choice
probabilities under the moment-feasible distribution that maximizes
expected agent utility, and \citet{ahipasaoglu2019} reformulate that
computation as a convex program that scales to many alternatives. This
answers a genuinely different question from ours --- the worst case over
a moment class rather than a specified law, with representative-agent
rather than distributional semantics --- so the two are complements: that
family avoids ever specifying the joint distribution, while this paper
prices and inverts the distribution the modeller actually specifies.

\paragraph{Bayesian data augmentation.} \citet{albert1993} and
\citet{mcculloch1994} sidestep the integral entirely by Gibbs sampling the latent
utilities; \citet{imai2005} implement the marginal data-augmentation version
in the \texttt{MNP} package, and \citet{loaizamaya2022} carry the Bayesian
route to large choice sets by variational methods. For Bayesian inference
this is a complete and often excellent answer. It is simply answering a different question than the
one benchmarked here: the cost moves into the Markov chain, the likelihood
surface is never formed, and choice probabilities appear only as posterior
functionals, so there is no pricing call to time or to check.

\paragraph{Analytic approximations.} \citet{clark1961} matched moments of
pairwise maxima; \citet{mendell1974} and \citet{solow1990} approximate by
sequential conditioning on first and second moments. These are fast, sometimes
startlingly accurate, and carry no error control, and the benchmark below
measures all three properties at once: Mendell--Elston prices the $n=10$ vector
in two milliseconds within $2\times10^{-3}$ of the reference, then drifts to
$3\times10^{-2}$ at $n=30$ with tail probabilities off by a factor of nearly
four, and nothing in the output announces the change. The Clark-type recursion
behaves alike. The family is current practice, not history:
\citet{bhat2011} builds the MACML estimator on precisely these
approximations, and the \texttt{Rprobit} package implements it
\citep{bauerRprobit}, with its asymptotics examined by
\citet{batram2019}. Expectation propagation
is the modern member of the same family. We benchmark representative
sequential moment approximations through the classical members; MACML's
composite-likelihood construction around its analytic ingredient is its own
estimator and is not itself benchmarked here.

\paragraph{Minimax tilting.} \citet{botev2017} computes single Gaussian
rectangle probabilities with bounded relative error deep into the tails,
in dimensions up to roughly a thousand. It is the accuracy reference, and
we use it as the referee for our own tail claims. Like GHK it prices one
alternative per run, so a whole race costs $n$ runs, and which method is
preferable depends on the question. For a single tail probability Botev's
is the right tool and this paper defers to it. For the all-$n$ vector it
is both slower and less accurate than the lattice at every size measured:
at $n=50$, $0.25$ s against $0.07$ s and $2.5\times10^{-3}$ error against
$2.9\times10^{-5}$; at $n=200$, $25.8$ s against $0.29$ s and
$1.4\times10^{-3}$ against $4.1\times10^{-5}$. Raising the tilting sample
to $R=10^{4}$ narrows the accuracy gap and widens the time one: at
$n=200$ that is $37.9$ s, a hundred and thirty times the lattice's cost
and still an order of magnitude short of its error.

\medskip

The reader will note the common thread. Smoothness has been, and continues to be, bought at a significant performance cost. The cost, compared to Monte Carlo, is that one alternative is computed at a time, and this cost seems to have been accepted as unavoidable. We will show that it is not, at least within the covariance grammar we specify.

We offer a different Faustian bargain that will often, but not always, be preferable. In exchange for the assumed covariance structure, which is very often assumed anyway, the user does not have to choose between
speed, accuracy, scale, or smoothness. They can have all four, and very often they need them for downstream modeling tasks beyond share inversion, such as welfare analysis, game theory, and propagation of derivatives to other parts of a system (for ratings or belief networks).

\section{The lattice race}\label{sec:lattice}

The application programming interface for \citet{cotton2021} asked for numerical representation of performance densities on an equi-spaced lattice. The current method instead requires this assumption in the form of user-supplied functions.\footnote{This is a crucial distinction. The lattice in the current work is only a computational device. The continuous race is more directly considered, compared to \citet{cotton2021} where the discrete race with explicit ties was the object under study and its relationship to the continuous race assumed.}

Emphasizing again our lowest-score-wins convention (arguably somewhat unfortunate for the special case of probit), let $X_i=\mu_i+\sigma_i
\epsilon_i$ with $\epsilon_i$ i.i.d.\ from a standardized base density (normal,
Gumbel, or any density supplied as code), and write $f_i$, $S_i$ for the density
and survival of $X_i$. From here on, $\mu_i$ is to be considered a \emph{running time}
location, the negative of the introduction's utility location. For avoidance of any doubt, raising
it makes contestant $i$ worse and lowers $p_i$. Inversion returns time
locations, and utility coefficients are their negation.

The algorithm evaluates \eqref{eq:field-intro} on a lattice $x_1<\dots<x_L$:
\begin{enumerate}\itemsep2pt
\item compute $\log S_j(x_\ell)$ and $\log f_j(x_\ell)$ for all $j,\ell$;
\item accumulate the field $L(x_\ell)=\sum_j \log S_j(x_\ell)$;
\item for each $i$, form the integrand
      $f_i(x_\ell)\,e^{L(x_\ell)-\log S_i(x_\ell)}$, which is contestant $i$'s
      density times everyone else's survival, with $i$ divided out of the shared
      field in log space;
\item sum times the lattice spacing, and normalize the vector.
\end{enumerate}
The cost is $O(nL)$: the field is built once, not once per contestant.

\subsection{Details of an adaptive lattice choice}

We write $$L(x)=\sum_j \log S_j(x)$$ for
the log survival field, evaluated once per lattice point across all
contestants, each contestant's leave-one-out product is a subtraction
rather than a division: $$\prod_{j\neq i}S_j = \exp(L-\log S_i).$$ This is
what makes one sweep price the whole field, since the alternative is to
re-form an $(n-1)$-fold product for each of $n$ contestants. It is also
what keeps the far tail finite. There, every $S_j$ is small enough that
$\prod_j S_j$ underflows to zero in double precision, and the division
form returns $0/0$ for any contestant whose own survival has underflowed;
a single such point makes the integral undefined. As a sum of logarithms
$L$ is an ordinary negative number of order $-10^{3}$, and $L-\log S_i$
is a finite difference.\footnote{Survivals are floored at $10^{-300}$ inside every
base so that $\log S$ is never $-\infty$, and the exponent is clipped to
$[-745,0]$: below $-745$ the exponential is zero in double precision
anyway, so a hopeless contestant's contribution decays to zero rather
than failing, and above $0$ the value would exceed one, which no product
of survivals can.}

We are economical with the lattice points. The lattice must cover the winning time, not the spread of abilities. However
hopeless a contestant is, it wins only by finishing in the range where
winning times fall; thus, outside that range, the integrand is negligible. In contrast, a lattice that was stretched across the whole ability span would waste compute as most of its points lie where nothing happens.

There is a key defensive measure taken, however. Let $$G(x)=1-\prod_j S_j(x),$$ the probability that somebody has finished
by $x$, which is the distribution function of the winning time. It is
built from the caller's own base survival (rather than a normal
surrogate, say), and it is made conservative. For
the lower endpoint every contestant is placed at its most favourable conditional location, and for the upper endpoint at its least favourable.

Those componentwise extremes need not occur at any single factor node, which is what makes the result an envelope: wider than any one conditional race requires, and covering every conditional race represented by the retained factor nodes.\footnote{This is akin, in cycling, to assuming a fast pace to floor the worst performance of sprinters, and a slow pace to ceil the best performance of climbers.}

Factor realizations outside the node set belong to the outer quadrature error, not to the lattice
window. Solving $G(x)=\delta$ and
$G(x)=1-\delta$ by bisection gives the two ends of the region the race
is decided in. The lattice is placed there with two standard deviations of padding at
each end for safety. At the achieved $\delta$ (which may be relaxed, as
described next), the truncated tails are controlled separately from the
bulk discretization error, and in the benchmark configurations they lie
below the accuracy floors reported shortly.

We also need a starting interval known to contain the answer. Bisection
halves an interval repeatedly and finds the point where $G$ crosses a
given level only if that point lies inside the interval it began with. A
fixed starting interval does not always contain it. Nine standard
deviations beyond the outermost contestant is wide enough for a Gaussian
race, whose survival decays exponentially, but not for a base with
polynomial tails: with Student-$t$ performance at four degrees of
freedom, the point where $G$ falls to $10^{-12}$ lies $1456$ standard
units out.

Bisecting an interval that excludes the crossing neither fails nor warns.
It converges to whichever endpoint is nearer and returns that instead.
Each endpoint is therefore moved outward, doubling the distance each
time, until $G$ at the lower endpoint is at most $\delta$ and $G$ at the
upper endpoint is at least $1-\delta$. Only then is the interval
bisected.

A finer point is that $\delta$ cannot always be honoured. The
lattice has a fixed number of points, chosen by the caller, spread evenly
across whatever window the quantiles ask for, so a wider window means
coarser spacing. The two error sources therefore move in opposite
directions: shrinking $\delta$ removes truncated tail mass but adds
discretization error in the bulk, where nearly all of the probability is.

An example is instructive. For a Student-$t$ race, $\delta=10^{-12}$ asks for a window some
$1456$ standard units wide, which at the default $257$ points places them
$5.7$ standard units apart. That is far coarser than the scale the
integrand varies on. This coarseness causes much more error than would be sustained by missing the tiny tail.

So $\delta$ is treated only as a request. If the window it asks for cannot be
covered at a spacing of half the smallest performance standard deviation
in the field, $\delta$ is multiplied by one hundred and the window
recomputed, repeating until the spacing is affordable.\footnote{In the
current implementation, the value finally used is reported in a warning.
The window is then the stated construction at a stated $\delta$, and a
caller who wants the original one back can have it by raising the number
of points.}

We determined that it is very important to use the caller's own
survival rather than a normal surrogate. Continuing the Student-$t$
example, a base at $\nu=2.5$ has polynomial tails, and a normal envelope
cuts the window short of them: at $n=40$ that costs
$5\times10^{-3}$ of total variation, against $5\times10^{-5}$ when the
true survival sets the window.

Placing the lattice on the bulk is also what makes small lattices work.
Thirty-three points on this window reach an accuracy that five hundred
points spread across the ability span cannot, because the span window
truncates the winner's tail and its error cannot fall below about
$6\times10^{-11}$ however many points are added to it. We found that on
fields with a large number of extreme longshots, the bulk window is a
hundred times more accurate for the same number of points.

\subsection{Derivatives}

A key output of the pass is the derivative of winning probability with respect to ability:
$$
\partial p_i/\partial\mu_j.
$$
This can be used in several ways. A full Newton method can wield the entire Jacobian, although we do not recommend that for large $n$. The diagonals can be used to precondition. And a single row of the Jacobian can be used for an estimation score. Section~\ref{sec:jac} interprets the matrix; what
matters here is how it is computed.

Prior to that we mention a lesson from the implementation, albeit one
that might strike some readers as obvious: a derivative is accurate only
if it differentiates the map that was \emph{actually evaluated}, rather
than the continuum integral that map approximates. This is not a fine point. The lattice should be the very same one. Both the full Jacobian and the single-row version build their window through the routine the forward
pass uses, so the two integrate over identical points.

Another potential trap lies in the fact that the diagonal should be integrated rather than assumed. Raising $\mu_i$
moves contestant $i$'s own density, contributing $\partial
f_i/\partial\mu_i$ to the integral. In the continuum this need never be
computed, because a common shift moves no probability, so each row sums to zero
and the diagonal follows from the off-diagonal entries. However, on a finite
lattice that identity holds only to quadrature accuracy. For this
reason, we determined that the diagonal should be computed directly.

In a similar vein, the normalization must be differentiated too. The lattice masses do
not sum to one \emph{exactly}. The returned probabilities are $p=a/T$ with
$T=\mathbf{1}^{\top}a$. Differentiating that quotient gives
$(A-p\,\mathbf{1}^{\top}A)/T$, where $A=\partial a/\partial\mu$. The
correction term vanishes in the continuum, where $T$ is constant. It does not vanish on a lattice.

The test is finite differences of the forward map itself: shift $\mu_j$,
re-price the field, and compare the change in $p$ against the computed
column. Write $J$ for the Jacobian built as above, and $J_0$ for the
continuum version, which takes its diagonal from the zero-row-sum
identity and omits the normalization correction.\footnote{On a coarse $65$-point Gaussian lattice $J$ matches
to $2\times10^{-11}$ and $J_0$ to $1\times10^{-8}$; under Student-$t$
performance at four degrees of freedom, to $1\times10^{-6}$ and
$1\times10^{-5}$. On a fine Gaussian lattice, where the quadrature error
is far below either figure, $J$ and $J_0$ cannot be told apart. To be clear these are the largest disagreement with the finite differences, relative to the largest
entry of the matrix.}

Columns and row sums do not behave the same numerically. Columns sum to zero identically. The quotient
divides by the lattice total, so $\mathbf{1}^{\top}J=0$ whatever the
lattice does, at $6\times10^{-17}$ from the coarsest grid to the finest.
Rows sum to zero only as the quadrature converges (translation
invariance is a property of the integral, but not the actual sum as implemented). The row sum discrepancy is a useful diagnostic.\footnote{The deviation is very small. It is
$1\times10^{-3}$ at $25$ points,
$2\times10^{-10}$ at $65$, and machine precision at the $257$-point
default, under either the Gaussian or the Gumbel base.}

\subsection{Benefits of maintaining the survival field}

Two byproducts are cheap once the field is built.

The first is the density of a dead heat, which is to say the chance that $i$ and $j$
finish together, per unit of finishing time. It is one more sum over the
same lattice. Section~\ref{sec:jac} shows that this quantity is exactly
the off-diagonal entry of the derivative matrix $\partial p/\partial\mu$,
so one computation answers both questions.

The second is a removal counterfactual: the probability that $j$ wins
once $i$ is taken out of the field. In the continuum this is one more
division, since deleting $i$ simply drops $S_i$ from the product. However, numerically, to continue our theme, care is required. Suppose the lattice is placed where the
\emph{original} winner finishes, but the survivors of a deletion need not
finish there. So, for example if we remove a dominant favourite, the race is decided where the original runner-up would have finished and it is entirely possible that the winner-bulk lattice would not reach that zone.\footnote{
Take $\mu=(-20,0,1)$ with unit variances, so the first contestant wins
essentially always and the bulk window sits at about $[-29,-11]$. Remove
it and the answer is $\Phi(1/\sqrt2)=0.760$, writing $\Phi$ for the
standard normal distribution function. Dividing the favourite out on the
original grid instead leaves a total lattice mass of $3\times10^{-28}$
where a correct integration would give one, so renormalizing it is
renormalizing rounding error. Removals are therefore integrated on their
own span window, with the total mass checked and a defect raised rather
than normalized away.}

\section{The covariance grammar}\label{sec:grammar}

Now we turn to a fixed set of covariance choices that can be accommodated with extremely high accuracy. Naturally, some structures make contestants conditionally independent given a few shared variables. Without that, we have no algorithm so, in increasing order of hierarchy, we consider:
\begin{enumerate}
\item a global factor,
\item factors per cluster, and
\item a tree of uniform shared effects.
\end{enumerate}
Each keeps the shared
field and the $O(nLQ)$ volumetrics but they come with additional numerical lattice considerations, the topic of this section. Each is exact given its structure;
estimating that structure from data is a separate problem we do not address
here. The approximate reduction of a dense covariance to this family is considered, but deferred to \S\ref{sec:general}.

\subsection{Factor covariance with sharpness as a guide}\label{sec:factor}

Let $X=\mu+Vf+\mathrm{diag}(\sqrt D)\,\epsilon$ with $f\sim N(0,I_k)$
independent of $\epsilon$, so $\Sigma=VV'+\mathrm{diag}(D)$. Conditional on $f$
the performances are independent with means $\mu+Vf$, and
\begin{equation}
p \;=\; \sum_q w_q\, p^{\mathrm{ind}}(\mu+Vf_q),
\end{equation}
a $Q$-node quadrature over runs of the independent algorithm at a cost of
$O(nLQ)$.

We are merely averaging the independent race over the shared luck, so to speak. This device is far from new, of course; indeed
\citet{butler1982} built exactly this quadrature for the one-factor
random-effects probit. The \texttt{lpRR}/\texttt{slpRR} interfaces of
\texttt{mvtnorm} integrate over the factor dimensions of
$BB'+\mathrm{diag}(D)$ by Monte Carlo. As we noted at the outset, each of
those integrals returns only one alternative's probability, which is one
reason their performance dies at scale.

The survival field changes that: one pass shares the survival product across all
$n$ alternatives, and \S\ref{sec:num} measures the difference against the
per-alternative version of this same quadrature. The same pass returns the own-parameter slopes $\partial p_i/\partial\mu_i$
at no extra cost (negative in this min-wins convention: raising a time
mean lowers the win probability), which \S\ref{sec:jac} uses to
precondition the inversion.

However, we must pay attention to numerical issues and lattice choice in particular. The integral over $f$ is $k$-dimensional, and experiments have taught us that the rule that discretizes it
has to depend on $k$ and the shape of the integrand.

Gauss--Hermite is the natural rule for a single Gaussian dimension.
Applying it in $k$ dimensions means taking every combination of its
nodes, which costs $q^{k}$ of them. At the default $q=15$ that is
$3{,}375$ nodes at rank three and $50{,}625$ at rank four, but
$759{,}375$ at rank five. So the product rule is used while it stays
under $10^{5}$ nodes, which in practice means rank four and below.

The other consideration is sharpness. What decides the race is not how
far the factor moves a contestant but how far it moves contestants
\emph{apart}: a factor loading every contestant identically shifts the
whole field and changes nothing, however large it is. The intrinsic
statistic is therefore pairwise,
\[
s_{\Delta}=\max_{i<j}\frac{\lVert V_i-V_j\rVert}{\sqrt{D_i+D_j}},
\]
the shared variation available to a contrast against that contrast's own
noise, but computing it is $O(n^2 k)$. The implementation first
gauge-fixes the loadings, $V\leftarrow PV$ with
$P=I-\tfrac1n\mathbf{1}\mathbf{1}^{\top}$ --- a common loading column
adds the same draw to every contestant and cannot move an argmin, so the
centered matrix prices the identical race, and every downstream decision
becomes invariant under $V\mapsto V+\mathbf{1}c^{\top}$ --- and then
dispatches on the linear-time bound
\[
s_{\mathrm{trigger}}\;=\;\sqrt2\,
\max_i \frac{\lVert (PV)_i\rVert}{\sqrt{D_i}}\;\ge\;s_{\Delta},
\]
which follows from the triangle inequality on centered rows. Notice that the bound cannot miss a pair whose contrast sharpness exceeds the threshold as false positives cost extra nodes, not error.

We also remark that the centered maximum is a
common-shift-invariant proxy for factor-space smoothness, but not intended as a complete
characterization of it. The dispatcher is therefore wise to use the
conservative side of it. But nothing needs to be guessed and everything follows
from $V$ and $D$ before any integration is done.

When the trigger is large the factor draw all but decides some pairwise contest, the
quantity being averaged over $f$ is close to a step function, and a rule
built for smooth integrands converges slowly on it.\footnote{Another lesson: the uncentered
maximum row norm $\max_i\lVert V_i\rVert/\sqrt{D_i}$ is tempting because
it skips the centering. But it obeys no such inequality. Instead centering can
\emph{increase} the largest row norm. A three-runner race with loadings
$(-2.9,0)$, $(2.9,0.01)$, $(2.9,-0.01)$ at unit $D$ has uncentered
maximum $2.90$, under the threshold, while its true pairwise sharpness
is $4.10$; a dispatcher trusting the raw norm stays with Gauss--Hermite
there and ships a total-variation error of $6.1\times10^{-3}$ against
the analytic three-runner orthant probabilities.}

A fixed rule of fifteen nodes per dimension can then miss up to five
percent of the total variation, and the error is easy to overlook because
the usual check does not see it. Adding lattice points does not reduce
it: the lattice discretizes the finishing time, while the loss is in the
average over $f$, so refining $L$ returns a stable answer that is stably
wrong.

We implement two remedies. The node order scales with the trigger rather
than staying fixed, within the rank limit above. And past a trigger of
$3$ at rank two or more the
family changes, scrambled Sobol points replacing the product rule, since
more nodes of a smooth rule is the wrong medicine for a near-step
integrand. Rank one with extreme sharpness gets a third treatment, an
equal-weight grid on quantiles, because Gauss--Hermite clusters its nodes
near the centre and that is the wrong place when the integrand is nearly
a step.

\subsection{Blocks and nested covariance}\label{sec:blocks}

The implementation also supports groups. Assign contestant $i$ to cluster $c(i)$ with loading $v_i\in\R^r$ and let
\begin{equation}
X_i=\mu_i+v_i'a_{c(i)}+\sqrt{D_i}\,\epsilon_i,
\qquad a_c \ \text{i.i.d.}\ N(0,I_r).
\end{equation}
This is a block-diagonal rank-$r$-plus-diagonal covariance. It can be useful for teams, conferences, and sibling
products, for example. And it falls into our framework because, conditional on its own cluster's effect, a contestant is independent of its teammates. Distinct clusters are
independent altogether so the joint survival factorizes across clusters,
\begin{equation}\label{eq:blockfield}
\Pr\{X_j>x\ \text{for every } j\} \;=\; \prod_c G_c(x),
\qquad
G_c(x)\;=\;\E_a \prod_{j\in c} S_j(x\mid a)
       \;=\;\sum_q w_q\, e^{S_c(x,a_q)},
\end{equation}
with $S_c(x,a)=\sum_{j\in c}\log S_j(x\mid a)$ the cluster's conditional
log-survival. To say this another way, $G_c(x)$ is just the probability that every member of
cluster $c$ survives past $x$, with the cluster's shared luck integrated out.

Contestant $i$ then wins against a ``cavity'' field ($i$ removed):
\begin{equation}\label{eq:cavity}
p_i \;=\; \int h_i(x)\prod_{c\ne c(i)}G_c(x)\,dx,
\qquad
h_i(x) \;=\; \sum_q w_q\, f_i(x\mid a_q)\,e^{S_{c(i)}(x,a_q)-\log S_i(x\mid a_q)}.
\end{equation}
The term $h_i$ couples $i$'s win density to its own teammates within the shared
draw of $a_{c(i)}$; the cavity product supplies the rest of the field.\footnote{Here we borrow the ``cavity'' terminology from statistical physics, because it is reminiscent of computing the influence on one site by deleting it,
asking what the remaining system does in the hole left behind, and then
reinserting it. A similar construction appears as the cavity distribution
in belief propagation. Note that here the deletion is quite literal and $\prod_{j\ne i}S_j$
is the field with contestant $i$ removed. Readers from econometrics may
prefer \emph{leave-one-out}. The closest familiar object is the inclusive
value of nested logit, an aggregate over alternatives that enters each
alternative's own probability, with the difference that the focal
alternative is excluded from the aggregate here rather than included.
\emph{Field}, likewise, is physics for $\sum_j\log S_j$, the single
aggregate quantity every contestant is priced against, and the reason one
pass serves all $n$ of them.}
The cost
is $O(nLQ_a)$ whatever the number of clusters, because the per-cluster factors
are accumulated once. Rank two is special: the conditional integrand is smooth
and a tensor Gauss--Hermite rule beats scrambled Sobol at equal nodes by a wide
measured margin.

One limitation of the hierarchical kernels is stated here but we aim to remove it.
Their quadrature is a fixed-order rule, without
the sharpness heuristic of \S\ref{sec:factor} being applied. So when a cluster loading
is large against the idiosyncratic noise the conditional integrand is a
near-step and Gauss--Hermite fails at any order (measured at cluster
sharpness $18$: $5\times10^{-2}$ of total variation against a
four-million-path referee).\footnote{At time of writing, the shipped kernels warn past sharpness $3$ and
direct such fields to the factor grammar whose node family escalates
automatically. However porting that escalation to the hierarchical kernels is not yet complete.}

The hierarchical kernels place their lattice by the same envelope
principle as \S\ref{sec:lattice}: for the lower endpoint every
contestant sits at its most favourable retained conditional location,
for the upper at its least favourable, and the idiosyncratic scale
alone sets the bisection.\footnote{The obvious shortcut is an
independent-marginal proxy using each runner's total standard deviation
at its unconditional location. This fails exactly where these kernels are
interesting. On a $400$-runner cluster at correlation $0.99$ the
proxy's window dropped $28$ percent of the winner mass, asymmetrically
across two loading groups, and silent normalization returned group
shares of $0.68/0.32$ where the symmetry of the construction forces
$0.50/0.50$.} The raw lattice mass is therefore treated as a diagnostic
rather than a nuisance: the kernels check it and raise an error on a
material defect instead of normalizing it away.

\paragraph{Nested.} Add one global factor with couplings $g_i$ and a dial
$\gamma\in[0,1]$: conditional on the global draw $f_q$ the model is a block race
at shifted abilities, so
$p=\sum_q w_q\,p_{\mathrm{block}}(\mu+\gamma g f_q)$, a finite mixture of block
races. At $\gamma=0$ the clusters are isolated, at $\gamma=1$ fully
coupled, and the covariance contribution scales as $\gamma^{2}gg'$.

\subsection{Trees}\label{sec:tree}

Let a rooted tree have the leaf clusters at its bottom and internal nodes $t$
above, and give node $t$ a scalar effect $b_t\sim N(0,1)$ applied with uniform
strength $\lambda_t$ to every leaf beneath it:
\begin{equation}
X_i=\mu_i+v_i\,a_{c(i)}+\sum_{t\,\in\,\mathrm{anc}(c(i))}\lambda_t b_t
    +\sqrt{D_i}\,\epsilon_i ,
\end{equation}
with the leaf-cluster construction of \S\ref{sec:blocks} restricted to
scalar loadings, $v_i\in\R$ and $a_c\sim N(0,1)$: the shipped tree
kernels take rank-one leaf-cluster effects, where the block kernels
support rank $r$, and they refuse a rank-$r$ leaf loading.

The implied covariance is hierarchical. Two contestants share the variance of
exactly their common ancestors. The root's effect is common to every contestant; hence, it
is choice-irrelevant (this is the recurring invariant), and the pricing kernel gauge-fixes it to zero.\footnote{However, a constructor that reports a covariance may still carry a root variance so
that the raw matrix matches a stated target; the prices are the same
either way.}

The uniform strength is the crucial restriction,
because a shared effect that hits every subtree member equally moves the whole
subtree's field by a translation of its argument. Messages therefore remain
functions of one variable. Per-leaf loadings on internal effects would force
two-dimensional message tables.

The algorithm is two passes over the same lattice.

\paragraph{Upward.} Each leaf cluster carries its block factor
$G_c(x)=\sum_q w_q e^{S_c(x,a_q)}$ from \eqref{eq:blockfield}. Each internal
node, visited deepest first, combines its children under its own effect:
\begin{equation}\label{eq:up}
G_t(x)\;=\;\sum_q w_q \prod_{c\,\in\,\mathrm{children}(t)} G_c\!\left(x+\lambda_t
a_q\right).
\end{equation}
The interpretation: $G_t(x)$ is the probability that every leaf below $t$ survives past
$x$, with $t$'s effect integrated by quadrature and each child's message shifted
because the effect moves that child's whole subtree at once. Shifted evaluations
are linear interpolations on the lattice.

\paragraph{Downward.} The root's cavity is one. Each node, visited shallowest
first, receives the field of everything outside its subtree:
\begin{equation}\label{eq:down}
R_t(x)\;=\;\Bigl[\sum_q w_q\,R_{\mathrm{parent}(t)}\!\left(x-\lambda_{\mathrm{parent}(t)}
a_q\right)\Bigr]\ \prod_{s\,\in\,\mathrm{siblings}(t)} G_s(x).
\end{equation}
The sibling messages sit outside the parent-factor quadrature. This
placement is correct but not obvious: the parent effect shifts the
candidate and every sibling subtree by the same amount, so it cancels
from every candidate-versus-sibling comparison; only the cavity from
outside the parent must be shifted and averaged. Contestant $i$ then races against its own leaf cavity,
$p_i=\int h_i(x)\,R_{c(i)}(x)\,dx$ with $h_i$ as in \eqref{eq:cavity}. The cost
is $O(nLQ)$ plus $O(L\,Q)$ per tree node.\footnote{We checked that a depth-one tree with zero strengths reproduces the rank-one block race to machine
precision. We ran a long $2^{22}$-path Monte Carlo and found that the recursion was
accurate as far as this could determine on every resolvable
entry at depths one through four (root-mean-square $z$ between $0.75$ and
$1.1$ noise level). Entries below the resolution agreed with minimax tilting
\citep{botev2017} to $0.3$ percent relative at probabilities down to
$7\times10^{-8}$, which is within the tilting estimate's own error.}

\begin{remark}[A connection to top-down portfolio construction]\label{rem:hrp}
Long-only portfolio weights can be read as choice probabilities, and we make the further connection to dendrogram-inspired wealth bisection (the hierarchical risk parity construction of \citet{lopezdeprado2016}).
There is a tree race whose implied correlation is exactly the
\emph{floored} cophenetic correlation of a dendrogram: for a linkage
built from the correlation distance $d_{ij}=\sqrt{(1-\rho_{ij})/2}$,
invert the distance transform at each merge height and floor at zero. Give internal node $t$ strength
$\lambda_t^2=\rho_t-\rho_{\mathrm{parent}(t)}$, with
$\rho_t=\max(1-2h_t^2,0)$ at merge height $h_t$, and leaf $i$
idiosyncratic variance $1-\rho_{\mathrm{first}(i)}$.\footnote{The floor at zero is not optional. Shared effects contribute nonnegative
correlation. A tree race cannot represent a negative cophenetic
correlation and dropping the floor pushes the other implied correlations
above one. The shipped \texttt{from\_linkage} constructor keeps the root at its
cophenetic value, which reproduces the floored cophenetic matrix exactly;
this is the covariance-reporting choice of \S\ref{sec:tree}, and the
pricing kernel ignores the common component regardless. Setting the root
effect to zero instead gives the same choices, since a common effect
cannot move an argmin, but different raw correlations.}
\end{remark}

Concentration limits stated on a dendrogram therefore become contest
inversions. We do not pursue the application here.

\section{Jacobians, tie densities, inversion}\label{sec:jac}

We now arrive at the central object of the algorithm: the Jacobian. An outcome of the race is a vector $x$ of finishing values, one per
contestant. Contestant $i$ wins on the set
\[
  R_i=\{x:\;x_i\le x_k\ \text{for every }k\},
\]
a convex cone: the outcomes in which nobody beats $i$. These cones tile
the space, one per contestant, and each win probability is the mass the
joint density puts on its own cone,
\[
  p_i(\mu)=\int_{R_i}q_\mu,\qquad q_\mu(x)=q_0(x-\mu).
\]
Where two cones meet, two contestants dead-heat ahead of the field. So we refer to the
shared boundary
\[
  F_{ij}=\{x:\;x_i=x_j\le x_k,\ k\ne i,j\}
\]
as the photo-finish face between $i$ and $j$.

Now differentiate. Abilities enter the density only as a translation, so
raising $\mu_j$ slides the density rather than reshaping it, and sliding
the density one way is the same as sliding the coordinate the other:
$\partial q_\mu/\partial\mu_j=-\,\partial q_\mu/\partial x_j$. The region
$R_i$ does not depend on $\mu$ at all, so for $j\ne i$ the whole
derivative falls on the integrand,
\[
  \frac{\partial p_i}{\partial\mu_j}
   =-\int_{R_i}\frac{\partial q_\mu}{\partial x_j}\,dx .
\]

We recognize this as a derivative integrated over a region, so the
divergence theorem applies to the vector field $q_\mu e_j$. This lets us
compute over the boundary instead of the volume:
\[
  \frac{\partial p_i}{\partial\mu_j}
   =-\int_{\partial R_i}q_\mu\,n_j\,dS ,
\]
with $n$ the outward unit normal.

However for this we take $q_0$ continuously differentiable with $q_0$ and
$\nabla q_0$ integrable and $|x|^{n-1}q_0(x)\to0$, conditions any
Gaussian law satisfies outright. Integrability alone would
license neither differentiating under the integral sign nor discarding
the flux at infinity. With them the contribution over the sphere at
infinity vanishes, and only the faces of $R_i$ remain.

The observation: those faces are the photo-finish surfaces $F_{ik}$, one
for each rival $k$, and on $F_{ik}$ the outward normal is
$(e_i-e_k)/\sqrt2$. Its $j$-th component is zero unless $k=j$. Every face
therefore drops out except $F_{ij}$. What is left is the statement the
proposition makes precise: slowing $j$ pushes probability out of $j$'s
winning region into $i$'s, and all of it crosses at the surface where the
two dead-heat.

\begin{proposition}[Photo-finish derivatives are a boundary flux]
\label{prop:tie}
For $j\ne i$,
\begin{equation}\label{eq:flux}
\frac{\partial p_i}{\partial\mu_j}
 \;=\;\frac{1}{\sqrt2}\int_{F_{ij}}q_\mu\,dS\;\ge\;0 ,
\end{equation}
the probability flux across the surface on which $i$ and $j$ dead-heat
ahead of the field. Parametrizing that face by the common value
$x_i=x_j=t$ contributes a surface factor $\sqrt2$ which cancels, giving,
with $\varphi_{ij}$ the joint density of $(X_i,X_j)$,
\begin{equation}\label{eq:tiecorr}
\frac{\partial p_i}{\partial\mu_j}
 =\int \varphi_{ij}(t,t)\,
   \Pr\{X_k>t\ \forall k\ne i,j \mid X_i=X_j=t\}\,dt ,
\end{equation}
and under independence
\begin{equation}\label{eq:tie}
\frac{\partial p_i}{\partial \mu_j}
  \;=\; \int f_i(x)\,f_j(x)\prod_{k\ne i,j}S_k(x)\,dx .
\end{equation}
Under conditional independence the same factorization holds node by node
and is averaged over the shared variables. Each row of the Jacobian sums
to zero.
\end{proposition}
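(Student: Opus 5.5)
The plan is to follow the divergence-theorem sketch given just before the statement, supplying the missing pieces. The first step is to justify differentiating under the integral sign. Because $q_\mu(x)=q_0(x-\mu)$ with $q_0\in C^1$ and $\nabla q_0$ integrable, dominated convergence applied to difference quotients, written as integrals of $\partial_{x_j}q_0$ along segments, gives
\[
\partial p_i/\partial\mu_j=-\int_{R_i}\partial_{x_j}q_\mu\,dx .
\]
The second step applies the divergence theorem to the field $q_\mu e_j$ on $R_i\cap B_r$ and lets $r\to\infty$. On the spherical cap the flux is bounded by the surface area $\sim r^{n-1}$ times $\sup_{|x|=r}q_\mu$, and this vanishes by the decay hypothesis $|x|^{n-1}q_0\to0$ (after absorbing the shift $\mu$).

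The faces of $R_i$ are the sets $\{x_i=x_k\}\cap\overline{R_i}$, which are exactly the closures of $F_{ik}$. Lower-dimensional intersections of faces have measure zero, so the Lipschitz (polyhedral) divergence theorem applies. On $F_{ik}$ the region lies on the side $x_i\le x_k$, so the outward normal is $(e_i-e_k)/\sqrt2$. Its $j$-th component is $-1/\sqrt2$ if $k=j$ and $0$ otherwise. Substituting gives
\[
\partial p_i/\partial\mu_j=-\int_{F_{ij}}q_\mu\cdot(-1/\sqrt2)\,dS,
\]
which is \eqref{eq:flux}. Nonnegativity is immediate since $q_\mu\ge0$.

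For \eqref{eq:tiecorr}, parametrize $F_{ij}$ by $t=x_i=x_j$ together with the remaining coordinates $x_{-ij}$ ranging over $\{x_k>t\}$. The map $(t,x_{-ij})\mapsto x$ has Jacobian factor $\sqrt2$ in the $t$ direction, since $\lvert e_i+e_j\rvert=\sqrt2$, and this cancels the $1/\sqrt2$. The inner integral of $q_\mu$ over $\{x_k>t\ \forall k\neq i,j\}$ equals $\varphi_{ij}(t,t)$ times the conditional probability, by the definition of the conditional density. Under independence $\varphi_{ij}(t,t)=f_i(t)f_j(t)$ and the conditional probability factorizes as $\prod_{k\ne i,j}S_k(t)$, giving \eqref{eq:tie}. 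For conditional independence, apply \eqref{eq:tie} conditionally on the shared variables: the conditional law is still a translation family in $\mu$, so the conditional derivative formula holds. Then average, interchanging the derivative with the expectation over the shared variables by dominated convergence, since the conditional integrands are bounded uniformly by integrable envelopes in the Gaussian case.

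The row-sum claim, $\sum_j\partial p_i/\partial\mu_j=0$, follows because $p_i(\mu+c\mathbf 1)=p_i(\mu)$: the shift maps $R_i$ onto itself. Differentiating in $c$ at $c=0$ gives the identity. Alternatively, the diagonal can be computed by the same flux argument, with the normal component equal to $+1/\sqrt2$ on every face $F_{ik}$, and it cancels the off-diagonal sum.

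I expect the main obstacle to be rigour at infinity and at the edges, namely applying the divergence theorem on an unbounded polyhedral cone with a merely $C^1$ density. My approach is truncation by balls, using the Lipschitz-domain version of the divergence theorem on $R_i\cap B_r$. This requires checking that the contribution of the face pieces outside $B_r$ also tends to zero, which holds because $q_\mu$ is integrable on the hyperplanes, a consequence of the Fubini-type estimate from the conditional-density representation.
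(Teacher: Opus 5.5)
Your proposal is correct and follows the paper's own argument. Translation turns $\partial_{\mu_j}$ into $-\partial_{x_j}$, the divergence theorem together with the decay hypothesis removes the flux at infinity, only $F_{ij}$ has a nonzero $j$-th normal component, the $\sqrt2$ surface factor cancels, and the zero row sum is translation invariance; your ball truncation simply makes the paper's sketch rigorous.

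One small remark: your closing worry about the face tails needs no hyperplane-integrability estimate. Since $q_\mu\ge0$ and $n_j=-1/\sqrt2$ is constant on $F_{ij}$ (and zero on the other faces), monotone convergence in $r$ shows the truncated face integrals converge to $\frac{1}{\sqrt2}\int_{F_{ij}}q_\mu\,dS$, and the limiting identity then forces this to be finite.
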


Everything else follows from the geometry. Symmetry is $F_{ij}=F_{ji}$:
one surface, one number, so $w_{ij}=\partial p_i/\partial\mu_j$ is
symmetric and nonnegative. The zero row sum is translation invariance, a
conservation law: what leaves one cone enters its neighbours. Writing $B$
for the oriented incidence matrix of the photo-finish graph and $W_e$ for
the diagonal of edge conductances $w_{ij}$,
\[
   J=-B^{\top}W_eB,
   \qquad
   (Jh)_i=\sum_{j\ne i}w_{ij}\,(h_j-h_i),
\]
the discrete analogue of $-\operatorname{div}(a\nabla h)$: $Bh$ takes
differences across photo-finish boundaries, $W_eBh$ is the flux across
them, $B^{\top}W_eBh$ the net flow into each winning region. The
mass-one identity is the same theorem in dimension one, since
$\sum_i f_i\prod_{j\ne i}S_j = \tfrac{d}{dt}(1-\prod_j S_j)$
integrates to one; this is why the implementation's unnormalized mass
checks are diagnostics rather than cosmetics, a material defect meaning
the lattice missed the region, not that normalization was needed.

\subsection{Related literature}

The Jacobian structure has prior art we were unaware of when this work was done.
\citet{muller2022}, differentiating the Williams--Daly--Zachary
gradient identity $\partial E/\partial u_i=\Pr\{i \text{ wins}\}$ for
general joint laws, derive the cross-Hessian as the density that two
alternatives tie for the maximum, and observe that the resulting matrix
is symmetric with signed off-diagonals and zero row sums.

They draw no graph and build no
algorithm from it, their object being convexity moduli for
prox-functions, and their cross-derivative remains an $(n-2)$-fold
integral of the joint density.

Our face parametrization reduces
the same object to a one-dimensional integral of a pair density against
the conditional survival of the field, which is the form in which the lattice works.

The two paths are valid. We prefer to keep the literal interpretation: probability
flux through the boundary between winning regions. This gives the formula, the symmetry, the conservation law, and the graph structure in one geometric picture. On the computational
side, \S\ref{sec:jaccomp}'s identity lets the dense matrix act in
linear work, as we now discuss.

\subsection{Jacobian computation}\label{sec:jaccomp}

Scalability rests on never materializing the matrix. Full Newton needs
the Jacobian only as an operator applied to a vector; a likelihood score
needs one row of it; and the shipped inversion below needs only the
own-coordinate preconditioning slopes.

Per factor node, write
$A_i(x)=f_i(x)/S_i(x)$ for contestant $i$'s hazard and
$G(x)=\prod_k S_k(x)$ for the field. The tie density factors through the
field,
\[
w_{ij}\;=\;\int G(x)\,A_i(x)\,A_j(x)\,dx ,\qquad j\ne i,
\]
so the apparently dense Laplacian is a continuous sum of rank-one
contributions, and its action collapses: substituting into
$(Jh)_i=\sum_{j\ne i}w_{ij}(h_j-h_i)$, the $j=i$ terms cancel between
the two sums, leaving
\[
(Jh)_i\;=\;\int G(x)\,A_i(x)
 \Bigl[\sum_{j} A_j(x)\,h_j\;-\;h_i\sum_{j} A_j(x)\Bigr]dx .
\]
The two inner sums are shared across every $i$, so the whole product
costs $O(nL)$ per node --- the same order as pricing --- rather than the
$O(n^2L)$ of assembling edges. This, more than the Laplacian structure
itself, is what turns the
tie-density Hessian into a scalable object. Observe that a graph with $\binom{n}{2}$
potential edges is applied without ever being constructed.

Both a single entry and a
Jacobian--vector product come out of the same field pass that produced
the probabilities, by the identity above (in the log domain, with
hazards clipped as \S\ref{sec:lattice} describes). Forming all $n^{2}$
entries is possible, but it is a choice with $\Omega(n^{2})$ cost in
output alone, and nothing downstream requires it.

\paragraph{Numerical considerations} In keeping with our
previous lattice remarks we delineate four objects. The first is the continuum
derivative, equation~\eqref{eq:tie}, the exact derivative of the integral
that defines $p$. The second is that formula evaluated by quadrature on a
lattice, with the diagonal imposed from the zero-row-sum identity. The
third, call it the grid derivative, is the exact derivative of the
normalized rectangle sum \emph{holding the selected grid fixed}: with
unnormalized masses $a$, total $T=\mathbf{1}^{\top}a$ and $A=\partial
a/\partial\mu$, it is $(A-p\,\mathbf{1}^{\top}A)/T$. The fourth is the
exact derivative of the full adaptive map, which adds a grid-motion term,
since the window itself depends on $\mu$.

The factor implementation returns the third. It differentiates the rectangle sum
itself, then applies the normalization quotient: a directional derivative
$v$ of the unnormalized masses is returned as
$(v-p\,\mathbf{1}^{\top}v)/T$. It does not differentiate the adaptive
window, and we measured the residual arising from that omission.

That quotient is critical. Without it, the returned directional
derivative disagrees with central finite differences of the returned map
by $2\times10^{-3}$ at $L=25$ lattice points. With it the two agree to
$3\times10^{-11}$ from $L=101$ upward, which is the production range.

At $L=25$ a residual of $6\times10^{-4}$ survives, and it is not a defect
in the quotient. It is the grid-motion term, the difference between the
third object and the fourth: a finite difference recomputes the window at
each perturbed $\mu$ and therefore sees a contribution the fixed-grid
derivative does not contain. It falls below the agreement figure by
$L=101$, so in the production range the grid derivative serves as the
derivative of the adaptive map to the accuracy reported.

The materialized Jacobians of the block, nested and tree kernels are
the \emph{second} object, not the third. They are continuum tie densities
evaluated by quadrature with the diagonal imposed, ignoring both the
normalization quotient and the window. They are consistent estimates of
the continuum derivative, and they are not exact derivatives of the
normalized discrete maps.\footnote{A caller who needs the
latter (an optimizer, say) should difference the forward map.}

The continuum form is implemented alongside the grid form because it is
symmetric by construction and can be used as a preconditioner without
further checking. The grid form is symmetric only to the accuracy of its
row sums, for the same reason: at $25$ points it is off by
$3\times10^{-4}$ and at $65$ by $7\times10^{-11}$, but at the $257$-point
default the two forms agree to machine precision, so the distinction is
one of guarantee rather than of observed values in production.

A comment on finiteness. The photo-finish integrand is
factored as $$[f_i e^{L-\log S_i}]\times[f_j/S_j].$$ The first factor is
bounded by a density. The second is a hazard rate, which grows linearly
under the normal base and exponentially under the Gumbel base; the
exponent clip absorbs either. The symmetric square-root split, which
looks more natural, overflows wherever survivals vanish.

As a final remark, the block Jacobian is exact for
rank-one cluster loadings. The nested Jacobian is the
mixture of block Jacobians. The tree Jacobian is exact within a cluster
and Gram-approximate across clusters. That approximation is safe for
Newton because the residual is always computed from the exact forward
map, so an approximate Jacobian can slow convergence but cannot change the
answer to which it converges.

\subsection{Inversion}

The inversion rests on a convex program.

\begin{theorem}[The inverse exists and is unique on contrasts]\label{thm:inverse}
Fix a covariance of full support (any member of the grammar with every
$D_i>0$) and let $W(\mu)=\E\min_i(\mu_i+\epsilon_i)$,
$\epsilon\sim N(0,\Sigma)$. Then $W$ is concave and differentiable with
$\nabla W(\mu)=p(\mu)$, and its Hessian is $-L(\mu)$, the negative of the
graph Laplacian whose edge weights are the correlated tie densities
\eqref{eq:tiecorr}, reducing to \eqref{eq:tie} under independence.
Every tie density is strictly positive, so $W$ is strictly concave on the
contrast space $\mathbf{1}^{\perp}$. For any target with $p^{\star}_i>0$
and $\sum_i p^{\star}_i=1$, the program
\[
\max_{\mu}\; W(\mu)-\langle p^{\star},\mu\rangle
\]
is constant along $\mathbf{1}$, strictly concave and coercive on
$\mathbf{1}^{\perp}$, and its unique mean-zero maximiser is the inverse:
$p(\mu^{\star})=p^{\star}$. If some $p^{\star}_i=0$ the supremum is
approached only as the contrast
$\mu_i-\min_{j:\,p^{\star}_j>0}\mu_j\to+\infty$ (the raw coordinate
statement depends on the gauge); no finite inverse exists, and any
floored target yields a finite point on that diverging contrast.
\end{theorem}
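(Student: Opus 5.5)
The plan is to build everything from the variational structure of $W$, with Proposition~\ref{prop:tie} supplying the Hessian. The argument runs in five steps.

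\textbf{Step 1: concavity and gradient.} For each fixed $\epsilon$, the map $\mu\mapsto\min_i(\mu_i+\epsilon_i)$ is a minimum of affine functions, hence concave and $1$-Lipschitz in the sup norm. Because $\E|\epsilon_i|<\infty$, the expectation $W$ is finite and concave. The pointwise gradient is $e_{i^*(\epsilon)}$ wherever the argmin is unique. Under full support ($\Sigma=VV'+\mathrm{diag}(D)$ with $D>0$) ties have probability zero, so dominated convergence lets us differentiate under $\E$. This gives $\nabla W=\sum_i e_i\Pr\{i\text{ wins}\}=p(\mu)$ in the min-wins convention. Differentiability then follows, since a concave function with a continuous gradient is $C^1$. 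Continuity of $p$ comes from the continuity of the Gaussian law under translation.

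\textbf{Step 2: Hessian and strict concavity on contrasts.} The Hessian is $\partial p_i/\partial\mu_j$, which Proposition~\ref{prop:tie} identifies as the tie density $w_{ij}\ge0$ off the diagonal, with zero row sums. Hence $\nabla^2W=-L(\mu)$, where $L$ is the Laplacian with weights given by \eqref{eq:tiecorr}. For strict positivity, note that with $D>0$ the joint density $q_\mu$ is everywhere positive. Each face $F_{ij}$ is a nonempty relatively open polyhedral set of dimension $n-1$, so $\int_{F_{ij}}q_\mu\,dS>0$. Equivalently, in \eqref{eq:tiecorr} both $\varphi_{ij}(t,t)>0$ and the conditional survival of a nondegenerate Gaussian is positive. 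The weighted graph is therefore complete, and $h^\top Lh=\sum_{i<j}w_{ij}(h_i-h_j)^2>0$ for every $h\notin\mathrm{span}(\mathbf 1)$. This is strict concavity on $\mathbf 1^\perp$.

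\textbf{Step 3: the objective.} Set $\Phi(\mu)=W(\mu)-\langle p^\star,\mu\rangle$. Since $W(\mu+c\mathbf 1)=W(\mu)+c$ and $\langle p^\star,\mathbf 1\rangle=1$, $\Phi$ is constant along $\mathbf 1$. It is strictly concave on $\mathbf 1^\perp$ by Step 2. Its stationary points are exactly the solutions of $p(\mu)=p^\star$. A strictly concave function has at most one maximiser, so the inverse is unique modulo $\mathbf 1$.

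\textbf{Step 4: coercivity (the main obstacle).} I expect this step to be the hardest. We need $\Phi(\mu)\to-\infty$ as $\|\mu\|\to\infty$ within $\mathbf 1^\perp$. First, $W(\mu)\le\min_i\mu_i+\E\max_i|\epsilon_i|=\min_i\mu_i+C$. Second, $\langle p^\star,\mu\rangle\ge\min_i\mu_i+p^\star_{\min}(\max_i\mu_i-\min_i\mu_i)$. Indeed, writing $\mu_i=m+(\mu_i-m)$ with $m=\min_j\mu_j$, the coordinate attaining the maximum contributes at least $p^\star_{\min}$ times the range. Combining the two bounds gives
\[\Phi(\mu)\le C-p^\star_{\min}\,\bigl(\max_i\mu_i-\min_i\mu_i\bigr).\]
On $\mathbf 1^\perp$ the range is comparable to $\|\mu\|$, so $\Phi$ is coercive there. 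A continuous coercive function attains its maximum, and the first-order condition at the maximiser yields $p(\mu^\star)=p^\star$.

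\textbf{Step 5: zero targets.} Suppose $p^\star_i=0$. At any finite $\mu$ full support gives $p_i(\mu)>0$, so no finite inverse exists. In the direction $e_i$ we have $\partial_{\mu_i}\Phi=p_i(\mu)-0>0$, so $\Phi$ increases strictly forever as $\mu_i$ is raised. The supremum is therefore approached only along $\mu_i-\min_{j:p^\star_j>0}\mu_j\to+\infty$. To see this, restrict the bound of Step 4 to the support of $p^\star$: it still forces the positive-weight contrasts to stay bounded near the supremum, so only the zero-weight coordinates can escape. For the final clause, replacing the zero entries by a floor $\eta>0$ and renormalising yields a target with full support. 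Step 4 applies to it and produces a finite point. As $\eta\to0$, the positivity and continuity of $p_i$ force that point's $i$-contrast to diverge.
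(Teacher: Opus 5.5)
Steps 1--4 are correct and follow the paper's route. The differences are minor: you use dominated convergence where the paper invokes Danskin's theorem, and your explicit bound $\Phi(\mu)\le C-p^\star_{\min}(\max_i\mu_i-\min_i\mu_i)$ replaces the paper's compactness constant $\kappa$; either works. The gap is the word \emph{only} in Step 5. That $\partial_{\mu_i}\Phi=p_i>0$ shows the supremum is not attained at any finite point. But strict monotonicity in one coordinate does not force that coordinate to diverge along every maximizing sequence: $-e^{-x-y}$ is strictly increasing in $x$, yet its supremum is approached with $x$ fixed and $y\to\infty$. Here another zero-target runner drifting off could play the role of $y$. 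Your restricted Step-4 bound shows only that the $S$-contrasts stay bounded near the supremum, where $S=\{j:p^\star_j>0\}$. That tells you which contrasts \emph{cannot} escape, not that the excluded one \emph{must}.

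Two ingredients close this, and they are what the paper's final paragraph supplies. First, identify the supremum. Let $W_S,\Phi_S$ be the race and objective restricted to $S$. The minimum over all runners is at most the minimum over $S$, so $\Phi(\mu)\le\Phi_S(\mu_S)\le\sup\Phi_S$. Conversely, putting $\mu_S$ at the restricted inverse and sending the excluded coordinates to $+\infty$ attains $\sup\Phi_S$ in the limit by dominated convergence. Second, establish a strict gap. Suppose $\mu_i-\min_S\mu$ stayed bounded above along a maximizing subsequence. It is then also bounded below: otherwise, gauging $\min_S\mu=0$, we get $W\le\E(\mu_i+\epsilon_i)=\mu_i\to-\infty$ while $\langle p^\star,\mu\rangle\ge0$. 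Together with your bounded $S$-contrasts, all differences $\mu_k-\mu_i$, $k\in S$, then lie in a compact set. On that set, $W_S-W\ge W_S-W_{S\cup\{i\}}=\E\bigl[(\min_{k\in S}(\mu_k+\epsilon_k)-\mu_i-\epsilon_i)^+\bigr]$ is continuous and, by full support, strictly positive, hence at least some $\delta>0$. So $\Phi\le\sup\Phi_S-\delta$, contradicting maximality. Your $\eta\to0$ sentence for floored targets needs the same compactness step before ``positivity and continuity of $p_i$'' yields a contradiction. The floored solutions' $S$-contrasts are bounded because $p_j(\mu^\eta)$ stays bounded below on $S$. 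Any other excluded runner that drifts off is handled by passing to the reduced race.
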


\begin{proof}
A minimum of affine functions of $\mu$ is concave and expectation
preserves this. Ties have measure zero under full support, so the
minimiser is almost surely unique, and Danskin's theorem
\citep{danskin1966} gives
$\partial W/\partial\mu_i=\Pr\{i\ \text{wins}\}$ --- the identity known
in discrete choice as the Williams--Daly--Zachary theorem
\citep{mcfadden1981}.\footnote{Danskin's
theorem differentiates a minimum over a parameter. For
$W(\mu)=\E\min_i(\mu_i+\epsilon_i)$ the naive move, differentiating
inside the expectation, is illegitimate because the minimum is not
differentiable where two contestants tie. The theorem says the
derivative exists anyway whenever the minimiser is unique, and equals the
derivative of the selected branch: here $\partial/\partial\mu_i$ of
$\mu_i+\epsilon_i$, which is one, on the event that $i$ is the
minimiser and zero otherwise. Taking expectations gives the win
probability. Full support is what supplies uniqueness, since ties then
have measure zero; without it $W$ has kinks and only a subdifferential.
This is the step that makes $p$ a gradient, and hence the inversion a
convex program rather than a root-find.}

The Hessian is given by Proposition~\ref{prop:tie}, which assumed no
independence: its off-diagonals are the boundary fluxes
\eqref{eq:tiecorr} under the full correlated law, and its rows sum to
zero by the same translation invariance $W(\mu+c\mathbf{1})=W(\mu)+c$
that fixes the diagonal without further computation. The Hessian is
therefore $-L(\mu)$ with $L$ the weighted graph Laplacian of the
photo-finish graph; under full support ($D_i>0$, so every pairwise difference has
positive variance) each edge weight is strictly positive, the tie graph
is complete, and $L\succ0$ on $\mathbf{1}^{\perp}$: $W$ is strictly
concave there.

Coercivity follows from the elementary bound
$W(\mu)\le\min_i\mu_i$: for each $i$,
$\min_k(\mu_k+\epsilon_k)\le\mu_i+\epsilon_i$, and $\E\epsilon_i=0$.
For a nonconstant unit contrast $d$ and interior target,
$\langle p^{\star},d\rangle>\min_i d_i$, because $p^{\star}$ puts
positive mass on a coordinate where $d$ exceeds its minimum; hence
\[
W(td)-t\langle p^{\star},d\rangle
 \;\le\; t\bigl(\min_i d_i-\langle p^{\star},d\rangle\bigr)
 \;\longrightarrow\;-\infty
\]
linearly in $t$. The decay is uniform, not merely raywise:
$d\mapsto\langle p^{\star},d\rangle-\min_i d_i$ is continuous and
strictly positive on the compact unit sphere of $\mathbf{1}^{\perp}$,
so it attains a positive minimum $\kappa$, and the displayed inequality
at $t=\lVert\mu\rVert$ gives $F(\mu)\le-\kappa\lVert\mu\rVert$ for every
$\mu\in\mathbf{1}^{\perp}$, writing $F$ for the objective, which is
coercivity for arbitrary diverging sequences. (The bound is on $F$
itself, not on $F$ relative to $F(0)$: the objective may climb toward
its maximizer before the linear decay takes over.) Constancy along
$\mathbf{1}$ is translation invariance
again, exactly when the target sums to one. A strictly concave coercive
function on $\mathbf{1}^{\perp}$ attains a unique maximum, where the
gradient vanishes: $p(\mu^{\star})=p^{\star}$.

For the boundary claim, suppose $p^{\star}_i=0$ for $i$ outside
$S=\{j:p^{\star}_j>0\}$. Under full support $p_i(\mu)>0$ at every finite
$\mu$, so no finite point inverts the target. Invert the target
restricted to $S$, then send each excluded $\mu_i\to+\infty$ holding the
$S$-contrasts fixed: by dominated convergence the race converges to the
restricted race and the objective to its restricted optimum, so the
supremum is approached along exactly those diverging contrasts.
Conversely, on any maximizing sequence an excluded contrast that
remained bounded would admit, after gauge fixing, a convergent
subsequence whose limit gives alternative $i$ strictly positive
probability, and the objective, continuous in $\mu$, would sit strictly
below the restricted optimum there --- contradicting maximality.
\end{proof}

\subsection{Related work}

We remark on the connection to economics. We are considering the Legendre dual of the social surplus function of
\citet{mcfadden1981}. Share inversion has frequently been viewed as a convex problem: \citet{berry1994} inverts market shares in
logit-family demand, \citet{galichon2022} identify the general random-utility
inversion with optimal transport, \citet{chiong2016} compute it by
linear programming in the discrete case, and \citet{muller2022} state
the inversion as the subdifferential of the conjugate surplus,
$u\in\partial E^{*}(p)$, en route to their prox-functions.

Closest is
\citet{li2018}, who for arbitrary non-atomic bases writes the inversion
as an unconstrained convex minimization whose objective gradient is the
share map, with globally convergent superlinear methods. That is a formulation
of the optimizer, but without a fast evaluator of the shares
and their Jacobian at large $n$, which is what we supply. The shared field is that evaluator, so
the two are complements. What the field representation adds is not the
convex program but its oracles: the shipped solver preconditions with
the field's own-coordinate slopes. The Hessian is available as a matrix-free operator at $O(nLQ)$ per application for Newton--Krylov methods.

\subsection{Fitting objective}

Because choice probabilities are
invariant to a common utility shock (as noted the model depends on $\Sigma$ only
through $P\Sigma P$ with $P=I-\mathbf{1}\mathbf{1}^{\top}/n$) a factor
proportional to $\mathbf{1}$, or a tree-root effect applied to every leaf, will not change any choice probabilities. Therefore the right fitting objective for a dense covariance
is the {\em projected} residual,
\[
\min_{V,\,D\ge0}\;\bigl\lVert P\,(\Sigma-VV'-\mathrm{diag}\,D)\,P\bigr\rVert_F^2,
\]
not a fit of $VV'+\mathrm{diag}(D)$ \emph{to} $P\Sigma P$ (the projection
of a rank-plus-diagonal matrix is no longer rank-plus-diagonal, so the
second problem has some manufactured error compared to the first).

The reference implementation minimizes that objective in
\texttt{factor\_model\_projected}, alternating two steps. The $V$-step
takes the top $k$ eigendirections of the current residual in the contrast
basis. The $D$-step solves for the diagonal exactly, subject to a lower
bound, from the normal equations
\[
  (P\!\circ\!P)\,d=\mathrm{diag}\bigl(P(\Sigma-VV')P\bigr),
\]
whose Gram matrix is closed form, so the step is $O(n)$ rather than a
least-squares solve (\S\ref{sec:general}).

Everything downstream works from projected residuals, so a matrix
already of the form $VV'+\mathrm{diag}(D)$ passes through whenever its
factor rank lies within the fitted grammar's budget --- up to the
common-shock equivalence, factor-quadrature error, and a diagonal floor
of $10^{-6}$ of each runner's own variance.\footnote{Where
the introduction speaks of an identified class, it means the restriction
that is choice-relevant and computable on $\mathbf{1}^{\perp}$: choices
depend on $\Sigma$ only through $P\Sigma P$, an observational-equivalence
statement, not a claim about what an estimation design can recover.
Covariances outside the grammar are perfectly good models; this grammar
simply does not price them.}

\subsection{Inversion algorithm}

The implemented inversion (\texttt{abilities\_from\_race}) runs as follows.

\begin{center}
\fbox{\parbox{0.92\linewidth}{%
\textbf{Inversion} (all grammars; min-wins). Given a target $p^{\star}$
with positive entries summing to one, set $t=\log p^{\star}$ and
initialize $\mu^{0}=-(t-\bar t)/2$. Repeat until
$\max_i|r_i|<10^{-8}$:
\begin{enumerate}\setlength{\itemsep}{0pt}
\item one field pass: $\hat p=p(\mu)$, and negative preconditioning
slopes $\tilde s_i$ (factor grammar: the own-coordinate grid slopes
$\partial p_i/\partial\mu_i$ of the same pass; block/nested/tree: the
own-slopes of an independent race at matched total variances, one extra
$O(nL)$ pass --- an approximation used only to precondition, never to
form the residual);
\item log residual $r_i=\log\hat p_i-t_i$; log-domain slope
$d_i=\min(\tilde s_i/\hat p_i,\,-10^{-6})$;
\item damped, clipped preconditioned step
$\mu_i\leftarrow\mu_i-\operatorname{clip}(\alpha\,r_i/d_i,\pm2)$,
then recenter $\mu\leftarrow\mu-\bar\mu$.
\end{enumerate}
Damping: $\alpha=1$ ($0.7$ at $n=2$, where the undamped update on the
$K_2$ photo-finish graph is a two-cycle); for the structure grammars
$\alpha=0.7$, halved whenever the residual fails to shrink.}}
\end{center}

Each sweep is a diagonally preconditioned residual iteration on the
log-probability residual
$r_i=\log p_i(\mu)-\log p^{\star}_i$.\footnote{In handicapper's terms,
for a million runners: you know the win probability you want for each
runner and must adjust handicaps until the model produces it. Each pass
of the reported solver computes every runner's current win probability
and every runner's \emph{own} slope --- if only this runner slows
slightly, how much does its own probability move? --- then adjusts each
runner by its personal sensitivity, re-prices the whole race, and
repeats.}

Interactions are not ignored (every re-pricing includes them);
they are just not consulted when choosing each step. The $Jv$ oracle is
the more powerful tool: propose changing \emph{all} abilities at once by
$v$ and it returns the resulting change in \emph{all} win probabilities,
without constructing the million-by-million Jacobian's $10^{12}$
entries. A Newton--Krylov solver would repeatedly ask it what
coordinated change removes the current errors most efficiently. That
more sophisticated solver is not what produced the reported
million-runner result; the own-slope iteration did, and making
Newton--Krylov beat it is future solver work.

Thus in contrast to a full Newton step solving a linear system in the Jacobian, here each
coordinate is instead divided by a preconditioning slope $\tilde s_i$,
which for the factor grammar is the own-coordinate slope, and for the other grammars an independent-race
approximation to it. The residual, by contrast, always comes from the
exact structured forward map, so the preconditioner does not change the
root equation; it can affect convergence and the finite-tolerance point
returned.

Working in logs rather than in probabilities matters because win probabilities span
many orders of magnitude, and an absolute residual would effectively
ignore everything but the front of the field. A step clip and a
recentering keep the iterate in the contrast space.

Theorem~\ref{thm:inverse} says the inverse exists and is unique for the
Gaussian grammars, which is where it is stated. For a general base the
same proof goes through when the location family has finite first
moments, an everywhere-positive sufficiently regular density and no
atoms: concavity and coercivity of $W$ use nothing Gaussian, and strict
concavity needs only positive tie densities, which full support supplies.
We have not written that generalization out formally, so for non-normal
bases the uniqueness claim should be read as proved under those
conditions rather than universally. The theorem also does not say that
this clipped iteration finds the inverse.

\subsection{Rationale for defaulting to a diagonal solver}

The solver's convergence, as implemented, stands on empirical evidence, which is strong on the
benchmark families but not universal. We can try to break the algorithm and construct strongly
correlated fields on which the damped iteration stalls at a residual
near $10^{-4}$ instead of converging.

A matrix-free Newton--Krylov solver using the exact grid Jacobian--vector product is
provided also but is work in progress. Naively assembled it lost badly to the damped diagonal iteration\footnote{
$387$ seconds without converging against $3.7$ seconds converged at
$n=200$.}. Assembled properly (log-residual system symmetrized by the
$\sqrt{p}$ scaling and using integration-by-parts form of the product,
conjugate gradients preconditioned by the own-slopes and also a null vector
projected out) it
does converge on the benchmark fields ($9.6\times10^{-10}$, but at roughly
sixty times the wall clock). On the constructed stall case it
diverges where the diagonal iteration stalls.

So, for this reason the diagonal iteration
is the production solver we recommend, whereas a robust Newton--Krylov remains solver
engineering. The nearest
proven result we know for a problem of this shape is the damped Newton
method of \citet{kitagawa2019} for semi-discrete optimal transport:
prescribe the mass each Laguerre cell must capture and adjust one
weight per cell, with a Hessian assembled from source mass on shared
cell boundaries much as our Jacobian is assembled from tie density on
shared win-region boundaries. They prove global linear convergence
under regularity conditions on the cost and source.
\citet{levy2021} run the method at ten million cells. Whether a
race-market analogue of their conditions yields a convergence proof
here is open.

We tested the round trip at $n=400$, with twenty
clusters and a tree of depth three (the committed \texttt{bench.py
invert}). Timing for independent was $0.04$\,s, tree $0.37$\,s, and blocks $0.67$\,s. A factor
rank-two took $1.7$\,s and nested $3.7$\,s. In each case the maximum log-probability residual was below $10^{-8}$.

The implementation does not hide errors. A zero or negative entry raises
rather than being quietly repaired, since by Theorem~\ref{thm:inverse} it
has no finite inverse. Flooring small entries is available but must be
requested by the user. For an entry whose target is exactly zero, the
theorem says every finite-floor solution is a finite point on a contrast
that diverges as the floor is removed, so the floored answer understates
that contrast in the direction the theorem indicates. For entries that
are merely small and positive, reading the floored-and-renormalized
solution as a coordinatewise bound on the true contrasts would need a
monotonicity argument we do not make here.\footnote{At time of writing
the diagnostics name which entries were floored, whether the iteration
converged, and the residual it reached; non-convergence warns.}

Every grammar gets that same contract. The target is validated once,
before the call splits by structure, and all five paths return through a
single exit, so a block or tree inversion reports floored entries,
iteration count and convergence exactly as the factor path does.

Structure matters at inversion time, and not by a little. Inverting
block-generated probabilities under an assumed independent model
mislocates abilities by up to three tenths of the field's spread.\footnote{At
$n=200$ with thirty clusters and within-cluster correlation $0.65$, the
largest mislocation is $2.49$ against a spread of $8.24$, with a median
of $0.34$.}

\section{Approximate results for dense covariance}\label{sec:general}

We ship a fitting procedure as follows for the general case. It comes with several caveats. We use the following notation. Let $$s_i=\sqrt{\Sigma_{ii}}$$,
$S=\mathrm{diag}(s)$, and $$C=S^{-1}\Sigma S^{-1}$$ the correlation
matrix. Every fit, residual and closing solve below is carried out in
$\Sigma$. Correlations appear only once, as the distance used for
clustering in stage~3.

That choice is apparently forced, as we discovered. The tempting
alternative is to standardize to $C$,
fit there, and restore the scales afterwards, which would let the fitter
work on a matrix with unit diagonal. It did not work, for a reason related to the same under-specification of covariance we have encountered several times already.

Choices depend on $\Sigma$ only through $P\Sigma P$, so
the residual that matters after restoration is $PS(C-M)SP$, not
$P(C-M)P$, and those differ because $P$ does not commute with $S$ unless
every scale is equal. Put another way: the direction choices ignore is
the constant vector in covariance coordinates, but $S^{-1}\mathbf{1}$ in
correlation coordinates. Fitting in correlation space with the ordinary
projection removes the wrong direction.\footnote{This really matters. Take $n=2$, $C$ equicorrelated with $\rho=0.9$, and
scales $s=(1,2)$. With two alternatives the only choice-relevant
quantity is the variance of the difference $X_1-X_2$, since that alone
decides the race. In correlation coordinates the fit $M=(1-\rho)I$ scores as exact.
Restored to covariance coordinates it gives a difference variance of
$0.5$ where the truth is $1.4$, wrong by nearly a factor of three, and
the correlation-space objective reports no error at all.} A single global rescaling of $\Sigma$ for conditioning is harmless but
unequal scales break the commutation.\footnote{With this issue resolved, the test suite pins a two-alternative example, and a round trip on a covariance built as
$VV'+\mathrm{diag}(D)$ exactly, with idiosyncratic variances drawn across
a hundredfold range. Pricing that matrix through the dense front door
must reproduce the race obtained from the $(V,D)$ it was built from.
Nothing there is approximate, so any disagreement is the coordinate
choice rather than the grammar.}

\subsection{Heuristic fitting algorithm}

The stages, in order:
\begin{enumerate}
\item Global factors: the quotient fit of \S\ref{sec:jac}
(\texttt{factor\_model\_projected}), $k=3$ throughout, returning $(V,D_0)$
by alternation. Each half-step is optimal in its own block, so the projected objective descends monotonically to a
coordinatewise-stationary point.\footnote{Above $n=800$ the $V$-step is subspace
iteration rather than a full eigendecomposition and its subspace is only
approximate.} The objective is
evaluated each sweep at $O(n^2)$. A sweep that fails to improve it is
discarded, and the alternation returns the best iterate.
\item The working residual is then
projected once, $R=P\,(\Sigma-VV'-\mathrm{diag}\,D_0)\,P$: the raw residual
still contains the choice-irrelevant common component the quotient fit
rightly ignored, and letting later stages chase it manufactures projected
error (measurably: the raw-residual variant distorts an input already of the
form $VV'+\mathrm{diag}(D)$ by $1.7\times10^{-2}$ of total variation;
the projected pipeline returns it at the $2\times10^{-4}$ node-noise
floor).
\item Blocks: average-linkage clustering on
$\sqrt{(1-C)/2}$ cut to at most $20$ clusters; for each cluster, the leading
eigenvector of the within-cluster block of $R$ with its diagonal zeroed,
one rank-one loading per cluster (clusters are disjoint).
\item Residual promotion: the top $m=5$ eigencomponents of what remains of $R$, diagonal
zeroed, appended as further factor columns; negative eigenvalues are not
promoted; numerically dead columns are dropped.
\item The closing diagonal
solves the projected problem's own normal equations,
$(P\!\circ\!P)\,d=\mathrm{diag}(P(\Sigma-V_{\mathrm{all}}V_{\mathrm{all}}')P)$
subject to $d_i\ge\ell_i$, where the floor is relative to each runner's
own variance, $\ell_i=10^{-6}\,\Sigma_{ii}$. Relative matters: an
absolute floor at a multiple of the \emph{mean} variance destroys nearly
singular contrasts, and $\Sigma=\mathrm{diag}(10^{-8},10^{-8},1)$ makes
the failure vivid --- it forced $\mathrm{Var}(X_1-X_2)$ from
$2\times10^{-8}$ up to $6.7\times10^{-4}$, turning a near-certain
head-to-head into a coin flip (choice error $0.48$) while the projected
covariance residual sat one hundred times below the warning threshold. The Gram is exactly
$(1-\tfrac2n)I+\tfrac1{n^2}\mathbf{1}\mathbf{1}^{\top}$, so the
constrained solve is closed form in $O(n)$: substituting $d=\ell+x$
componentwise with $x\ge0$, writing $\ell$ for the vector of
per-runner floors, leaves the same Gram with a shifted right-hand side,
and the KKT conditions make the active set a threshold set, solved by
water-filling. It is the constrained minimizer, not an
unconstrained solve with a floor applied afterwards, which the Gram's
off-diagonal coupling would leave suboptimal. At $n=2$ the Gram is rank
one, only $d_1+d_2$ is identified, and the symmetric representative is
returned.
\item A second candidate, a pure eigenfit at the same total rank, closed by the same
diagonal solve, is computed, and whichever candidate leaves the
smaller choice-relevant residual is kept, the pipeline winning ties:
the greedy factor-plus-blocks allocation is the wrong shape for
globally smooth covariance, and the kernel battery below measures the repair.
\end{enumerate}

The constants $k$, the cluster count and $m$ are
fixed, not tuned per ensemble; a spectral rule for $m$ was tested and
rejected (it saturates its cap for marginal gains and does not repair the
failure case below). What results is a factor-plus-blocks race priced in
one seeded, reproducible call (deterministic under Gauss--Hermite
nodes, seeded scrambled Sobol past the tensor budget), and the pipeline
is the shipped one-call
(\texttt{fit\_covariance}; \texttt{race\_probabilities} accepts
\texttt{cov=} directly), not a research script.

The alternation
is nonconvex and its default start can stall on easy inputs (an exactly
in-grammar matrix at $n=12$ with variances spread across two decades sat
at objective $0.88$ where zero was attainable), so when the first
start's objective is materially above zero, two further deterministic
starts are tried --- diagonal-heavy and eigen-residual --- and the best
kept; each rescues that case to $10^{-13}$.

The shipped function also
reports its own quality and warns on failure modes. In particular it
flags a well-fitted matrix whose pricing needs a high-rank, sharp factor
integral. The quadrature, not the fit, is then the binding constraint. A nearly
singular
contrast can also cause problems. The report therefore carries
\[
\max_{i<j}\;
\frac{\bigl|(e_i-e_j)^{\top}R\,(e_i-e_j)\bigr|}
     {(e_i-e_j)^{\top}\Sigma\,(e_i-e_j)} ,
\]
the worst pairwise difference-variance error in units of that
difference variance, which is $0.49$ on the diagonal example above
(where the global residual is $5\times10^{-4}$) and $10^{-8}$ once the
relative floor restores the fit.

The objective is nonconvex in $(V,D)$, so
no global optimization can be assured. The residual report exposes the
value reached.

Multistart dispersion is the natural diagnostic, but it
has to be measured in the right coordinates, and the obvious choice is
the wrong one. Across four ensembles at $n=300$, with eight random diagonal starts
spanning two decades of assumed idiosyncratic share, every start reaches
the same objective to $4\times10^{-16}$ relative, and no random start
ever beats the shipped one. On three of the four the fits agree in
choices as well, to a total variation of $10^{-14}$.

\subsection{Symmetry, and its challenges}

Block equicorrelation is the exception, and it was an instructive one. There the
starts agree on the objective and on $D$ to every digit, and still price
races that differ by a total variation of $0.25$. The cause is degeneracy
rather than optimization. Centering a six-block matrix leaves a five-fold
tied leading eigenvalue, so a rank-three fit selects an arbitrary
three-dimensional subspace of a five-dimensional one. Every selection is
exactly optimal and each implies a different race. Raising the rank to
the multiplicity makes the fit exact and collapses the disagreement to
$8\times10^{-4}$.

These experiments convince us of two things. First, disperse the priced race across starts, not the objective, which is blind to precisely the case that matters. And when
dispersion does appear, the remedy is rank rather than a better
optimizer. So the shipped fitter warns when the chosen rank splits a tied
eigenvalue of the centered covariance, and tells the user which rank that would take
the whole tied group.

The extreme case is also the oldest model in the subject. Under equicorrelation $P\Sigma P$ is a multiple of $P$, so its
spectrum is $(n-1)$-fold degenerate. A great many representations are
exactly optimal. They are all
choice-equivalent, so none can be said to be wrong. But they do not cost the same to
price. A representation carrying three factor columns is priced by a
three-dimensional quadrature, where the equivalent pure diagonal needs
none.

Group structures degenerate the same way. One can arrange equal-sized groups with equal within-group correlation. This leaves one tied leading direction per contrast among the groups, so $g$ equal
groups give multiplicity $g-1$, with equicorrelation the case $g=n$. The
default $k=3$ is therefore exactly right for four equal groups but otherwise wrong, or should we say, inefficient. Measured at $n=24$ and within-group correlation
$0.7$, against the same race priced through the block grammar directly:

\begin{center}\small
\begin{tabular}{rrrr}
\toprule
equal groups & leading multiplicity & rank cuts the tie & TV against grammar\\
\midrule
$2$ & $1$ & no & $2.4\times10^{-4}$\\
$3$ & $2$ & no & $2.8\times10^{-4}$\\
$4$ & $3$ & no & $1.8\times10^{-4}$\\
$6$ & $5$ & yes & $1.0\times10^{-2}$\\
$8$ & $7$ & yes & $3.7\times10^{-2}$\\
$12$ & $11$ & yes & $4.0\times10^{-2}$\\
\bottomrule
\end{tabular}
\end{center}

The first three rows sit at the fitter's ordinary approximation error.
The last three are the rank cutting a tied group, and the cost grows with
how much of the group is cut. The warning fires on all three, and on the
three-group row as well, where the error is harmless: erring toward the
false positive is the right direction here.

To emphasize, all of these issues apply to the dense covariance usage where the package is being asked to fit. A
user can supply a structure they prefer without ambiguity. And symmetry is
presumably known in advance. It is exchangeability among runners that produces the tie, and a
caller who knows the group structure knows the rank to ask for, namely
one less than the number of exchangeable groups.

\subsection{Accuracy}
The one-call fit-and-price takes four
seconds ($0.6$ of them the fit) against thirty-six for a $10^6$-draw
simulation, and agrees with it at its noise floor on the resolved bulk.

The tail is harder to check, and we are careful about what the check
establishes. For the $82$ percent of alternatives with zero simulation
wins, agreement is unverifiable entry by entry. Worse, that set is
selected by the same simulation that would have to certify it, so a
fixed-set binomial bound does not apply to it.

Sample splitting does apply, and we run it as a separate experiment
(its field is configured differently, which is why its zero-win fraction
differs from the one above). Fix the zero-win set on one
half of the draws, then count the wins an independent second half places
anywhere in that set. Because the set is fixed before the second half is
looked at, its count is binomial in the set's true mass, and a
Clopper--Pearson interval is valid.

On a dense $n=2000$ correlation with three global factors and a
forty-block layer, priced in a wide field where $63.6$ percent of
alternatives take no win in the first million paths, the model puts
$1.13\times10^{-4}$ of mass on that set. The second million paths place
$111$ wins in it, a $95$ percent interval of
$[9.13\times10^{-5},1.34\times10^{-4}]$. The model's mass is inside it.

Scored on the second half alone, so that no statistic is evaluated on the
draws that selected its own target, full-vector total variation is
$4.4\times10^{-3}$ against a split-half noise floor of $5.4\times10^{-3}$:
at the referee's own resolution. Tail exactness at small $n$ still comes
from the Botev referee, which prices individual orthant probabilities
rather than counting wins.

\subsection{Testing the choice of objective}

Fitting
$VV'+\mathrm{diag}(D)$ \emph{to} $P\Sigma P$ is the wrong
objective (\S\ref{sec:jac}). With the identified objective and projected residuals throughout, the ensemble
study below ran a raw top-$k$ eigenfit pipeline and the identified
pipeline over twenty seeds per ensemble. On thirteen of fifteen named
ensembles the arms are statistically indistinguishable and none favors
the raw arm. On block equicorrelation the identified arm is eight
times better at the median ($2.5\times10^{-3}$ against
$2.0\times10^{-2}$), with a worst seed of $3.2\times10^{-3}$ against
the raw arm's $0.162$: the raw eigenfit spends its rank on a
near-common component, as the identification argument predicts.

\subsection{Testing against different covariance generation families}

Accuracy depends on the family. We used the named collections of the \texttt{randomcov} library (pinned at commit
\texttt{0d27a51}), over \emph{twenty seeds} per ensemble at $n=300$,
each seed refereed by its own $10^6$-path simulation (the committed
\texttt{run\_ensembles4.py} and \texttt{run\_kernel4.py} regenerate
everything).

The TV columns are full-vector total variation against the empirical
frequency vector, zeros included, so the referee's own counting noise and
any unresolved tail mass contribute to the reported number. That makes
the statistic upward biased in expectation, by convexity, but not a
realization-wise upper bound: on a single seed sampling error can cancel
part of the model error. Only the per-entry column is restricted, to
entries with at least $25$ referee wins. TV is reported as median, ninth
decile and worst over seeds; the last column is the median per-entry
absolute error on those resolved entries.

\begin{center}\small
\begin{tabular}{lrrrr}
\toprule
ensemble & med TV & q90 TV & worst TV & med $|p-\hat p|$\\
\midrule
block equicorrelation & $2.5\times10^{-3}$ & $3.0\times10^{-3}$ & $3.2\times10^{-3}$ & $1.4\times10^{-5}$\\
factor $+$ sparse links & $2.9\times10^{-3}$ & $6.7\times10^{-3}$ & $1.6\times10^{-2}$ & $1.9\times10^{-5}$\\
sparse-precision graphical & $4.5\times10^{-3}$ & $5.6\times10^{-3}$ & $6.0\times10^{-3}$ & $1.5\times10^{-5}$\\
Wishart & $1.2\times10^{-2}$ & $1.4\times10^{-2}$ & $1.5\times10^{-2}$ & $2.6\times10^{-5}$\\
AR(1) & $4.6\times10^{-2}$ & $8.4\times10^{-2}$ & $1.5\times10^{-1}$ & $7.0\times10^{-5}$\\
residuals (dense-strong) & $6.8\times10^{-2}$ & $8.6\times10^{-2}$ & $9.2\times10^{-2}$ & $2.5\times10^{-4}$\\
\bottomrule
\end{tabular}
\end{center}

Re-running any fixed covariance seed moves its score only at the
referee's noise scale; the medians themselves are aggregates over the
twenty distinct covariance seeds, whose genuine spread the q90 and worst
columns report. Eight further named
ensembles (LKJ, onion, vine, elliptope walk, animals, Marchenko--Pastur
and spiked spectra, hierarchical) fall between the third and fifth
rows; Archakov--Hansen sits with AR(1) at $4.9\times10^{-2}$.

\subsection{Kernel covariance as a weakness}

Testing identified a type of covariance where the user should not expect our fitting procedure combined with the lattice technique to yield good results, and they would be advised to use the alternatives.

A stratified battery
(kernel type $\times$ length scale $\times$ budget, twenty seeds per
cell, unit-square inputs) separates two distinct mechanisms that a
single ``kernel'' row had conflated:

\begin{center}\small
\begin{tabular}{llrrr}
\toprule
kernel & length scale & med TV, $m{=}5$ & $m{=}12$ & $m{=}5$, $2^{14}$ nodes\\
\midrule
RBF & $0.08$ & $3.4\times10^{-2}$ & $3.1\times10^{-2}$ & $3.3\times10^{-2}$\\
RBF & $0.2$  & $2.6\times10^{-2}$ & $2.6\times10^{-2}$ & $1.2\times10^{-2}$\\
RBF & $0.4$  & $1.6\times10^{-2}$ & $1.7\times10^{-2}$ & $8.2\times10^{-3}$\\
Mat\'ern-$\tfrac32$ & $0.08$ & $2.9\times10^{-2}$ & $2.5\times10^{-2}$ & $2.4\times10^{-2}$\\
Mat\'ern-$\tfrac32$ & $0.2$  & $2.5\times10^{-2}$ & $2.3\times10^{-2}$ & $1.6\times10^{-2}$\\
Mat\'ern-$\tfrac32$ & $0.4$  & $2.1\times10^{-2}$ & $1.7\times10^{-2}$ & $1.2\times10^{-2}$\\
\bottomrule
\end{tabular}
\end{center}

At short length scale the bottleneck is \emph{representation}.
Correlation is locality, the rank needed grows with the number of
correlation lengths in the domain, and accordingly extra rank helps
($3.4\to3.1\times10^{-2}$) while extra quadrature does not.

At long length scale the fit is not the problem at all. A rank-$27$
eigenfit holds the RBF draw at length scale $0.4$ to a projected residual
of $10^{-3}$. But pricing a well-fitted smooth kernel means a
twenty-seven-dimensional \emph{sharp} factor integral, since the closing
diagonal is nearly zero and the conditional races are nearly
deterministic. There the node budget is the binding constraint:
raising the residual-factor budget from $m=5$ to $m=12$ does little,
while $2^{14}$ nodes halve the error
and $2^{16}$ nodes reach $6\times10^{-3}$ on the diagnostic draw.

Stage~6 of the pipeline is what reaches this regime at all. The greedy
factor-plus-blocks allocation leaves a projected residual of $0.14$ on
the same draw that the matched-rank eigenfit holds to $10^{-3}$, a
misallocation worth up to $7\times10^{-2}$ of total variation, and not a
limit of the grammar.

The dispatch conclusion stands. Genuinely local covariance wants
sequential conditioning with cheap sparse-precision draws, and Mat\'ern
is exactly the family with the SPDE route \citep{lindgren2011}.
Near-singular smooth covariance wants either a larger node budget or
plain simulation. A production front door should dispatch on the
structure it detects, and the shipped call's warnings distinguish the two
regimes.

\section{Benchmarks}\label{sec:num}

\paragraph{The structure-aware per-alternative comparator.} The fairest
rival is not GHK but this paper's own conditioning run the literature's
way: for each alternative separately, integrate the conditional
independent-race integrand over the same Gauss--Hermite factor nodes
(the construction of \citet{butler1982}; \texttt{mvtnorm}'s
\texttt{lpRR}/\texttt{slpRR} are its Monte Carlo form). With the same
nodes and the same lattice the two computations are algebraically
identical, and the measurement
confirms it: at $n=200$, rank~2, the per-alternative
version agrees with the shared field to total variation
$3\times10^{-15}$ and costs $601$ times as much ($22$\,s against
$36$\,ms; the committed \texttt{bench.py bm}). The comparison isolates
the paper's contribution exactly: not the conditioning, which is forty
years old, but the field shared across alternatives.

\paragraph{Against the field.} All-$n$ choice probabilities under the rank-2
factor covariance family of \S\ref{sec:factor}, against a
$2^{20}$-point QMC reference. Tail is the maximum absolute log-error over
reference probabilities in $[10^{-4},10^{-3}]$; the $n=10$ draw has none.
Frequency simulation is run at wall clock matched to the lattice and at ten
times that budget; ``zeros'' counts tail alternatives that received no draws.

\begin{center}\small
\begin{tabular}{lrrrrr}
\toprule
& \multicolumn{2}{c}{$n=10$} & \multicolumn{3}{c}{$n=30$}\\
method & time & TV & time & TV & tail\\
\midrule
lattice race & 2.5 ms & $1.8\times10^{-4}$ & 4.7 ms & $8.3\times10^{-4}$ & $0.07$\\
Genz, full vector & 46 ms & $1.8\times10^{-4}$ & 1.40 s & $8.3\times10^{-4}$ & $0.07$\\
frequency, matched time & --- & $3.8\times10^{-3}$ & --- & $1.0\times10^{-2}$ & $0.45$, 2 zeros\\
frequency, $10\times$ time & --- & $9.2\times10^{-4}$ & --- & $2.0\times10^{-3}$ & $0.21$\\
Mendell--Elston & 2.0 ms & $1.8\times10^{-3}$ & 18 ms & $3.2\times10^{-2}$ & $1.30$\\
Clark-type & 1.4 ms & $1.1\times10^{-2}$ & 17 ms & $1.6\times10^{-2}$ & $1.33$\\
\bottomrule
\end{tabular}
\end{center}

Genz and the lattice agree to the reference's noise and to each other.
Of the $300\times$ cost ratio at $n=30$, the one-at-a-time interface
supplies an unavoidable $n$-fold component; the remainder reflects the
differing integral representations, tolerances and implementations, and
we do not apportion it further.

At the paper's headline scale the $n$-fold component alone is decisive.
One probability at $n=200$ costs $0.43$ s at \texttt{maxpts}
$=2.5\times10^{5}$, agreeing with the lattice to four or five digits, and
six digits needs multi-second budgets (the committed \texttt{bench.py
genz200}). The full vector therefore costs $86$ s against $26$ ms for the
lattice, a factor of $3\times10^{3}$ before any accuracy matching.

The Mendell--Elston and Clark-type moment recursions are the only
entries whose error grows
silently: its tail entries are off by $e^{1.3}\approx3.7$ with nothing in
the output to say so. The race's tail figure is the reference's own
noise, and lattice self-convergence places its error near $10^{-8}$.

\paragraph{Referee agreement.} A replay harness maps each benchmark
probability to its difference orthant and prices it through the two
field-standard R implementations. \texttt{mvtnorm}'s Genz--Bretz agrees with
the lattice to within its own reported error bound in every family:
$3.8\times10^{-7}$ against a bound of $8.1\times10^{-7}$ at $n=10$, and
$2.7\times10^{-8}$ against $5.8\times10^{-8}$ on a case whose smallest
probability is $3\times10^{-10}$. Botev's minimax tilting agrees to its own
Monte Carlo noise, which in relative terms is $2.2\times10^{-4}$ at
$p\approx3\times10^{-10}$: the tail claim rests on the named accuracy
reference, not only on self-convergence. On the independent family an
adaptive-quadrature referee reaches machine precision and the lattice matches
it to $1.9\times10^{-15}$. An invariance battery (two-runner closed form,
translation, permutation, symmetry) holds to $10^{-16}$ throughout, except
the two-runner comparison at $7\times10^{-9}$, which is the lattice
discretization itself.

\paragraph{Stress boundaries.} An adversarial battery pushes each failure
axis until something gives. Depth: against pure-relative adaptive
quadrature, relative error is $4\times10^{-13}$ at $p\approx10^{-30}$ and
$2\times10^{-8}$ at $p\approx10^{-50}$, with the first genuine breakdown
near $p\approx10^{-119}$, a laggard twenty-five standard deviations behind.
Inversion: the round trip $p\to\mu\to p$ holds $10^{-9}$ in log even when
the target contains a $10^{-10}$ entry or ties at the $10^{-9}$ level. The
Gumbel base reproduces softmax to $10^{-15}$ at $n=1000$; exact duplicates
split exactly, and an $\epsilon$ perturbation moves the split linearly.

The one operative boundary is factor strength. When loadings make implied
correlations exceed roughly $0.9$, the argmax integrand loses smoothness in
factor space and Gauss--Hermite converges slowly in the node count: at
loading scale $4$ the $25$-node rule still sits at $8\times10^{-3}$ total
variation. Scrambled-Sobol factor nodes restore reference-level accuracy
($4\times10^{-4}$) at the same call: heavy correlation wants
quasi-Monte-Carlo nodes, not deeper polynomial rules. The default
escalates the family automatically past sharpness $3$, in both the Python
and R implementations (scrambled Sobol and Halton respectively), which
agree to $5\times10^{-4}$ on the worst case; the figures above are
measured with that escalation disabled, to expose the boundary itself.

\paragraph{Against GHK.} All-$n$ choice probabilities under a rank-2 factor
covariance, against a $2^{20}$-point QMC reference. TV is a bulk metric and
blind to the tails, so the same tail column as above is reported: the
maximum absolute log-error over reference probabilities in
$[10^{-4},10^{-3}]$, which for probabilities this small is the log-odds
error, and which the reference's own counting noise floors near $0.1$.

\begin{center}\small
\setlength{\tabcolsep}{4pt}
\begin{tabular}{rrrrrrrrrr}
\toprule
& \multicolumn{3}{c}{lattice race} & \multicolumn{3}{c}{GHK, $R=10^3$} & \multicolumn{3}{c}{GHK, $R=10^4$}\\
$n$ & time & TV & tail & time & TV & tail & time & TV & tail\\
\midrule
10  & 2.3 ms & $1.8\times10^{-4}$ & --- & 1.0 ms & $8.4\times10^{-3}$ & --- & 8 ms & $2.1\times10^{-3}$ & ---\\
50  & 7.0 ms & $1.7\times10^{-3}$ & $0.11$ & 21 ms & $4.2\times10^{-2}$ & $1.10$ & 214 ms & $1.2\times10^{-2}$ & $0.29$\\
200 & 25 ms  & $2.4\times10^{-3}$ & $0.15$ & 436 ms & $3.2\times10^{-2}$ & $1.50$ & 4.8 s & $1.2\times10^{-2}$ & $0.35$\\
\bottomrule
\end{tabular}
\end{center}

The $n=10$ draw has no reference probability in the tail band. The
lattice's tail figures equal the reference's own noise; GHK's are two to
ten times larger in log terms even at ten times the draws, which is the
tail blindness the TV column cannot show.

Self-convergence places the lattice error near $10^{-8}$. GHK error
contracts as $O(R^{-1/2})$ and its measured complete-vector cost grows
superquadratically, with exponent $2.15$ to $2.8$ across the measured
range. The lattice is deterministic under Gauss--Hermite nodes and
seeded-reproducible under scrambled Sobol, supplies smooth matrix-free
derivatives free of simulation noise, and prices the tails. Simulated
likelihoods can of course be differentiated, but they inherit the noise.

Our claim is bounded as follows. For complete-vector pricing and
inversion under the factor, block, nested and tree covariance grammars,
the shared-field construction removes the per-alternative repetition
intrinsic to standard GHK implementations, and dominates the
per-alternative GHK implementation tested here in both accuracy and cost.
GHK implementations with antithetics, quasi-random sequences or
reordering heuristics would move the constants, not the $n$-fold shape.
Estimation pipelines built around GHK are a larger object than this loop,
and the estimation paragraphs below report where simulation remains
competitive at face value. GHK also retains general rectangle
probabilities and locality-structured covariance, and \citet{botev2017}
remains the reference for single extreme tail probabilities.

\paragraph{Share inversion under sampling noise.} The first estimation
exercise is deliberately the saturated case. Simulate $T=50{,}000$ choices
from one fixed menu of $n=30$ alternatives under a known rank-2 factor
covariance and estimate the mean-zero utilities. With
the menu fixed the model is saturated over the simplex interior, so the
analytic-gradient maximizer is the inversion of the observed shares and the
exercise measures how faithfully each method performs that inversion
under multinomial noise.

Each count is smoothed by one half before any arm sees it, which is the
posterior-mean share vector under the multinomial Jeffreys prior,
$\mathbb{E}[p_i\mid c]=(c_i+\tfrac12)/(T+n/2)$. (It is not the Jeffreys
MAP: the $\mathrm{Dirichlet}(\tfrac12)$ posterior has exponents
$c_i-\tfrac12$, whose mode leaves empty cells on the boundary. Read as a
penalized likelihood in simplex coordinates, add-half is the
$\mathrm{Dirichlet}(\tfrac32)$ mode.) In the saturated model the estimator
is the race inverse of that smoothed vector, and the simulated arms
maximize the same add-half-smoothed criterion, so the comparison stays
like for like.

This is not cosmetic. The unpenalized saturated likelihood has \emph{no}
finite maximizer when any count is zero, which Theorem~\ref{thm:inverse}
states and this design would otherwise walk into: the smallest
alternative here has $p=2.5\times10^{-5}$, so its expected count is
$1.27$ and it is empty with probability $e^{-1.27}=0.28$, so about
eleven of forty replications should see that cell empty. Thirteen of the
forty do, and in this design it is the only cell that ever empties: every
empty-cell replication is that alternative. On those thirteen an
unpenalized optimizer returns a tolerance-dependent point on a boundary
sequence rather than a maximizer. Forty replications, with the arms paired by construction
since every method sees the same draw:

\begin{center}\small
\begin{tabular}{lrrr}
\toprule
estimator & RMSE (median) & RMSE (mean $\pm$ s.e.) & median fit time \\
\midrule
analytic-gradient smoothed inversion & $0.0168$ & $0.0168\pm0.0006$ & $2.4$ s \\
MSL, $R=100$ & $0.0286$ & $0.0280\pm0.0006$ & $3.5$ s \\
MSL, $R=1000$ & $0.0184$ & $0.0185\pm0.0005$ & $39.9$ s \\
\bottomrule
\end{tabular}
\end{center}

\noindent The exact path is both faster and more accurate than the
$R=100$ simulator here, and beats the $R=1000$ accuracy in a sixteenth
of its time. Because the arms share every draw, the paired differences
are the sharper statement: against $R=100$ the exact arm is better by
$0.0112\pm0.0006$, nearly nineteen standard errors, winning $39$ of $40$
replications; against $R=1000$ by $0.0017\pm0.0002$, seven standard
errors, winning $33$ of $40$.

The $R=1000$ margin is small in absolute terms, which is the honest
characterization: enough simulation draws do approach the exact answer,
at forty times the cost per fit, and the surviving paired gap is
consistent with finite-$R$ simulation bias rather than multinomial
sampling noise, though pairing alone cannot rule out another systematic
implementation effect. The timing column is from
an unloaded machine at the original eight replications; the accuracy
columns are from all forty and are insensitive to load.

\paragraph{Covariate estimation.} The overidentified case behaves
differently, and we report it as we found it. Draw observation-specific
design matrices $X_t$ ($n=10$ alternatives, $d=3$ covariates,
$T=800$ observations), utilities $\mu_t=X_t\beta$, one choice per
observation, covariance known, and estimate $\beta$ by individual-level
maximum likelihood: the exact path uses lattice probabilities with the
analytic score (one Jacobian row per observation, a single field pass);
the simulated path uses this paper's own Rust GHK with common random
numbers and finite-difference gradients. Eight replications, means with
standard errors:

\begin{center}\small
\begin{tabular}{lrr}
\toprule
estimator & RMSE$(\hat\beta)$ & median fit time \\
\midrule
analytic-gradient MLE & $0.0363\pm0.0066$ & $41.3$ s \\
MSL, $R=100$ & $0.0380\pm0.0073$ & $5.4$ s \\
MSL, $R=1000$ & $0.0367\pm0.0067$ & $39.6$ s \\
\bottomrule
\end{tabular}
\end{center}

\noindent All three estimators reach the statistical floor. With three
parameters and eight hundred observations the simulation error averages
out, and a well-implemented simulator at $R=100$ is the fastest of the
three at this size. The exact path's time is dominated by per-observation
Python overhead: the compiled kernel accounts for under a second of the
$41$, so the comparison at $n=10$ measures plumbing, not method.

What the exact likelihood buys in estimation is not this row of the
table. It is freedom from the choice of $R$, gradients without simulation
noise, the cost law as $n$ grows, and diagnostics that simulation smooths
over.

That last point is not hypothetical. On the classic Fishing data, the
unrestricted-covariance MNP likelihood evaluated exactly is
boundary-seeking, still rising at loading norms near $4000$: a ridge
that the finite-$R$ simulated objective had obscured. The behavior is
reproducible from the committed estimator (\texttt{winning.mnprobit}),
which detects the ridge and reports it as a \texttt{boundary\_}
diagnostic rather than returning the interior point simulation would have
manufactured.

For interpretability of that norm: the parameterization is rank-2
loadings with the reference alternative's rows fixed at zero and unit
idiosyncratic variances, so global scale is pinned by the idiosyncratic
normalization and the growing norm is a genuine choice-relevant
direction, not the scale gauge. A profile likelihood along that
normalized direction is the diagnostic a full treatment would add.

\paragraph{The Genz--Bretz score.} Our quadrature comparator used
\texttt{pmvnorm}, which exposes no score, with finite-difference gradients
($31$ vector evaluations per step at $n=30$, roughly eight times the entire
exact-gradient share-inversion fit). The \texttt{mvtnorm} manual (v1.4-2)
also documents \texttt{slpmvnorm} and, for the reduced-rank covariance
$BB'+\mathrm{diag}(D)$ of this paper's benchmarks, \texttt{lpRR} and
\texttt{slpRR}: simulated log-likelihoods and score functions that
integrate over the $K$ factor dimensions. That interface is itself the
conditioning construction of \S\ref{sec:factor} with Monte Carlo in place
of quadrature and no shared field across alternatives, which makes it the
natural structure-aware comparator for a fuller estimation study.

\paragraph{GHK protocol.} The GHK figures throughout use this paper's own
Rust implementation: per-alternative sequential conditioning on the
reference-differenced covariance with a fresh Cholesky factorization per
alternative, pseudorandom draws (xoshiro family) with a per-alternative
seed offset so that parameter evaluations share common random numbers, no
antithetics, no quasi-random sequence, no variable-reordering heuristic,
parallelized across alternatives. Cost figures are complete-vector; the
large-$n$ columns of the introduction are extrapolations of the measured
law and are displayed as such.

\paragraph{Scale.} Ten million contestants price in $245$ seconds under the
block grammar (ten thousand clusters, $257$-point lattice, the streaming
field pass; the committed \texttt{bench.py tenmillion}), and the Rust
classic-calibration kernel runs $232$ times the speed of the pure-Python
implementation. Throughput at this size is a systems statement; the accuracy
evidence lives at the scales the referees above can reach.

\paragraph{Parity.} One vector file embeds the inputs and outputs of
$22$ scenarios spanning the principal implemented grammars, boundary
cases and numerical identities used above. The four language
implementations of \S\ref{sec:avail} replay them: nineteen to within
$10^{-7}$ (most to $10^{-10}$), and the three fit- or optimizer-mediated
scenarios to between $5\times10^{-4}$ and $5\times10^{-3}$.

\section{Summary}\label{sec:summary}

The pieces of this paper are individually old. The one-dimensional
representation of the choice event is in \citet{domencich1975}, who
judged it computationally intractable for correlated probit. The convex
duality of calibration is in \citet{li2018}. The tie-density
derivatives and the surplus function's Laplacian structure are in
\citet{muller2022}, albeit independently arrived at from a different direction. A Hessian vector product at the cost of a gradient
is \citet{pearlmutter1994}. Building the survival field once and dividing each contestant out to collapse the forward vector to $O(nL)$ is performed here differently than \citealt{cotton2021}, but that is where the idea is borrowed and crucially, this is the point of departure from other smooth methods that require iteration over all items.

Herein we have organized the pieces into an efficient algorithm where a
single shared survival field renders probabilities for all $n$
alternatives at once and supplies the own-coordinate slopes that
precondition the million-alternative calibrations reported here.

We have applied and interpreted the dense tie-density Jacobian: a weighted graph Laplacian and a probability flux through the boundary between winning regions computable in linear work per lattice point. We have supplied a covariance grammar that, subject to its self-described limitations, lifts probit models and their cousins out of the small-to-medium scale domain and makes them practical for a variety of modern problems where previously, only logit was deemed to be a starter.

\section{Availability}\label{sec:avail}
\texttt{pip install winning} is pure Python; \texttt{winning[fast]} adds the
Rust kernels as one abi3 wheel per platform. A base-R package mirrors the full
interface, and a zero-dependency browser port drives a live demonstration at
\texttt{winning.microprediction.org}.

\appendix
\section{Extrapolated GHK cost brackets}\label{app:extrap}

The lattice column below is measured; the GHK column is an
extrapolation of the two measured exponents ($2.15$ and $2.8$) far
beyond any regime in which GHK has been run, and cache effects,
parallelism, Cholesky reuse or a change of regime would move it.

\begin{center}\small
\begin{tabular}{rrrr}
\toprule
$n$ & lattice, measured & GHK at $10^4$ draws, cost-law bracket & multiplier \\
\midrule
$10^4$ & $0.18$ s & $20$ hours -- $3.7$ days & $4\times10^{5}$ -- $2\times10^{6}$ \\
$10^5$ & $2.7$ s & $0.3$ -- $6.5$ years & $4\times10^{6}$ -- $8\times10^{7}$ \\
$10^6$ & $29$ s & $46$ -- $4{,}100$ years & $5\times10^{7}$ -- $4\times10^{9}$ \\
\bottomrule
\end{tabular}
\end{center}

Nobody prices a million-alternative field with GHK; the brackets say
what the measured law implies about why. The ten-million forward-field
benchmark of \S\ref{sec:num}, which is actually run, is the meaningful
systems demonstration.

\end{document}